\tikzstyle{every picture} = [>=latex]
\title{The Computational Complexity of Positive \\ Non-Clashing Teaching in Graphs\thanks{An extended abstract of this paper will appear in the proceedings of ICLR 2025.}}
\author[a]{Robert Ganian}
\author[a]{Liana Khazaliya}
\author[b]{Fionn~{Mc~Inerney}}
\author[a]{Mathis Rocton}
\affil[a]{Algorithms and Complexity Group, TU Wien, Austria}
\affil[b]{Telef\'{o}nica Scientific Research, Barcelona, Spain}
\date{}
\newcommand{\probtitle}[1]{\gdef\@problemtitle{#1}}% Store problem title
\newcommand{\probinput}[1]{\gdef\@probleminput{#1}}% Store problem input
\newcommand{\probtask}[1]{\gdef\@problemtask{#1}}% Store problem question
\newcommand{\bigoh}{\mathcal{O}}
\newcommand{\true}{\texttt{True}}
\newcommand{\false}{\texttt{False}}
\newcommand{\defquestion}[3]{
	\vspace{1mm}
	\noindent\fbox{
		\begin{minipage}{0.96\linewidth}
			\begin{tabular*}{\linewidth}{@{\extracolsep{\fill}}lr} \textsc{#1} & \\ \end{tabular*}
			{\bf{Input:}} #2 \\
			{\bf{Question:}} #3
		\end{minipage}
	}
	\vspace{1mm}
}
\newtheorem{theorem}{Theorem}
\newtheorem{proposition}[theorem]{Proposition}
\newtheorem{claim}[theorem]{Claim}
\newtheorem{definition}{Definition}
\newtheorem{observation}[theorem]{Observation}
\newtheorem{corollary}[theorem]{Corollary}
\newtheorem{lemma}[theorem]{Lemma}
\newcommand{\fvs}{\mathtt{fvs}}
\newcommand{\pw}{\mathtt{pw}}
\newcommand{\strictp}{\textsc{Strict Non-Clash}\xspace}
\newcommand{\normalp}{\textsc{Non-Clash}\xspace}
\newcommand{\bigo}{\mathcal{O}}
\newcommand{\eqB}{_{\sim_{\mathcal{B}}}}
\newcommand{\simB}{\sim_{\mathcal{B}}}
\newcommand{\equivBS}{_{\equiv_{\mathcal{B}}^{T}}}
\newcommand{\simBS}{\equiv_{\mathcal{B}}^{T}}
\newcommand{\core}{\mathcal{K}}
\newcommand{\coreBx}{\mathcal{K}_{(B,x)}}
\newcommand{\extcore}{\mathcal{K}_1}
\newcommand{\extcorebis}{\mathcal{K}_2}
\newcommand{\extcoreT}{\mathcal{K}^{T'}_2}
\newcommand{\cg}{\cellcolor[gray]{.9}}
\tikzset{
	circ/.style = {circle,draw,fill,inner sep=1.3pt}
}
\begin{document}

\maketitle

\begin{abstract}
We study the classical and parameterized complexity of computing the positive non-clashing teaching dimension of a set of concepts, that is, the smallest number of examples per concept required to successfully teach an intelligent learner under the considered, previously established model. For any class of concepts, it is known that this problem can be effortlessly transferred to the setting of balls in a graph~$G$. We establish (1) the \NP-hardness of the problem even when restricted to instances with positive non-clashing teaching dimension $k=2$ and where all balls in the graph are present, (2) near-tight running time upper and lower bounds for the problem on general graphs, (3) fixed-parameter tractability when parameterized by the vertex integrity of $G$, and (4) a lower bound excluding fixed-parameter tractability when parameterized by the feedback vertex number and pathwidth of $G$, even when combined with $k$.
Our results provide a nearly complete understanding of the complexity landscape of computing the positive non-clashing teaching dimension and answer open questions from the literature. 
\end{abstract}

\section{Introduction}
While typical machine learning models task a learner with finding a concept $C$ from a concept class~$\mathcal{C}$ based on an---often randomly drawn---sample, in \emph{machine teaching} (specifically in its commonly considered batch variant) the learner is provided a set of examples by a teacher; crucially, here the examples can be selected in a way which allows the concept to be reconstructed from as few examples as possible. Machine teaching is a core topic in computational learning theory and has found applications in a variety of areas including
robotics~\cite{TC09,ACYT12},
trustworthy AI~\cite{MZ15,ZZW18},
inverse reinforcement learning~\cite{HLMCA16,BN19},
and education~\cite{Zhu15,CAMPY18,ZSZR18}. 
While numerous models of machine teaching have been investigated to date~\cite{SM91,GK95,GM96,ZLH11,GCS17,MCV19,TelleHF19}, in this article we focus on the recently developed \emph{positive non-clashing teaching model}~\cite{KSZ19,FKS23}.

In non-clashing teaching, given a finite binary concept class $\mathcal{C}$, for each pair $C_1$, $C_2$ of distinct concepts in $\mathcal{C}$, at least one example provided for at least one of $C_1$ or $C_2$ must not be consistent with the other concept. A key feature of non-clashing teaching is that it is the most efficient model (in terms of the number of required examples) satisfying the Goldman-Mathias collusion-avoidance criterion~\cite{GM96}---the ``gold standard'' for ensuring that the learner cannot cheat, \emph{e.g.}, via a hidden communication channel with the teacher. Moreover, a teaching model is \emph{positive} if the examples provided for each concept $C$ are required to be positively labeled for $C$. The restriction of teaching models to the positive setting is common and well-motivated from applications in, \emph{e.g.}, recommendation systems~\cite{SPK00}, computational biology~\cite{WDM06,YJSS08}, and grammatical inference~\cite{SO94,Denis01}; see also the early works of Angluin~\cite{Angluininf,Angluinjcss}. Non-clashing teaching has been proposed and studied in the positive setting not only within the initial papers introducing the concept~\cite{KSZ19,FKS23}, but also in subsequent works (\emph{e.g.},~\cite{ChalopinCIR24}).

While positive non-clashing teaching has the potential to be highly efficient, realizing this potential requires us to solve the computational task of actually constructing a small set of examples for the given concepts. More precisely, one aims at computing (a witness for) the teaching dimension of a given concept class, \emph{i.e.}, the minimum integer $k$ such that each concept is provided with at most~$k$ examples that satisfy the conditions of the model. 
On the positive side, instead of considering various different types of concepts and examples, we can restrict our attention to the setting where each concept is a ball in some input-specified graph $G$, and the possible examples are vertices of~$G$. Indeed, it is known that any finite binary concept class $\mathcal{C}\subseteq 2^V$ can be represented by a set $\mathcal{B}$ of balls in a graph $G$ as follows: $V(G)=V\cup \{x_C \mid C\in \mathcal{C}\}$, $x_C$ is adjacent to $x_{C'}$ for all $C,C'\in \mathcal{C}$, $x_C$ is adjacent to $v\in V$ if and only if $v\in C$, and $\mathcal{B}=\{B_1(x_C)\mid C\in \mathcal{C}\}$~\cite{ChChMc,ChalopinCIR24}.

On the negative side, the problem is computationally intractable, and remains so even in the restricted setting where \emph{every} possible concept (\emph{i.e.}, every ball in $G$) is present; to distinguish this case from the general one where not all concepts need to be present, we refer to it as \emph{strict}. In particular, Chalopin, Chepoi, Mc~Inerney, and Ratel~\cite{ChalopinCIR24} recently carried out an initial complexity-theoretic investigation of computing the positive non-clashing teaching dimension in the strict setting. There, they established the \NP-hardness of the problem for instances with large teaching dimension (even when restricted to the highly restricted class of split graphs), obtained runtime upper and lower bounds under the \emph{Exponential Time Hypothesis}~\cite{IPZ}, and designed a so-called \emph{fixed-parameter} algorithm for the problem when parameterized by the size of the vertex cover of $G$. 

In this article, we significantly improve over each of these results, answer two open questions posed by the authors of the aforementioned work~\cite{ChalopinCIR24}, and obtain a nearly complete understanding of the computational complexity of computing the positive non-clashing teaching dimension (in both the strict and non-strict settings).

\paragraph*{Contributions.}
Let us refer to the problems of computing the positive non-clashing teaching dimension in the strict and non-strict settings as \strictp and \normalp, respectively. Formal definitions complementing the informal descriptions given above are provided in Section~\ref{sec:prelims}.

Our first result concerns the complexity of \strictp on instances with constant positive non-clashing teaching dimension. The reductions of Chalopin, Chepoi, Mc~Inerney, and Ratel~\cite{ChalopinCIR24} only establish the \NP-hardness of the problem for instances with large (\emph{i.e.}, input-dependent) positive non-clashing teaching dimension~$k$. In fact, as their first open question, the authors ask whether \strictp\ is \NP-hard or polynomial-time solvable when the sought-after dimension $k$ is a fixed constant; the question is not only theoretically interesting, but also highly relevant as instances of small teaching dimension are precisely the candidates for efficient teaching. We settle this by a highly non-trivial reduction (Theorem~\ref{thm:split2}) which establishes that determining whether the positive non-clashing teaching dimension is at most $2$ is \NP-hard---and remains so even when restricted to the same class of \emph{split graphs} where \strictp\ was previously shown to be \NP-hard (for large $k$)~\cite{ChalopinCIR24}. We note that this result is, in a sense, best possible: determining whether an instance of \strictp\ has a positive non-clashing teaching dimension of $1$ is trivial as it is equivalent to testing whether $G$ is edgeless~\cite{ChalopinCIR24}.

Next, we proceed to the running time bounds for solving the problem. Typically, the running time upper bounds are given by an exact algorithm, while the lower bounds are obtained from a suitable ``tight'' reduction under the Exponential Time Hypothesis~\cite{IPZ}. In the preceding work, Chalopin, Chepoi, Mc~Inerney, and Ratel~\cite{ChalopinCIR24} obtained algorithmic lower and upper bounds of $2^{o(n\cdot d)}$ and $2^{\bigoh(n^2\cdot d)}$, where $n$ and $d$ are the number of vertices and the diameter of $G$, respectively. From our reduction and a more careful algorithmic analysis, we obtain a lower bound of $2^{o(n\cdot d\cdot k)}$ and an upper bound of $2^{\bigoh(n\cdot d\cdot k\cdot \log n)}$ (Theorem~\ref{thm:ETHlower} and Proposition~\ref{pro:exactalgo})---making the bounds almost tight, with just a logarithmic factor in the exponent separating the two.

While the aforementioned bounds apply to the problem in general, often one may only need to solve the problem on ``well-structured'' graphs. The more refined \emph{parameterized complexity} paradigm~\cite{DowneyF13,CyganFKLMPPS15} offers the perfect tools to analyze and identify precisely which structural properties of input graphs---usually captured by a suitable integer \emph{parameter} $k$---allow us to circumvent its general intractability. The analog to the complexity class $\mathsf{P}$ in the parameterized setting is $\mathsf{FPT}$ (``\emph{fixed-parameter tractable}''), which characterizes parameterized problems solvable in $f(k)\cdot n^{\bigoh(1)}$ time; intuitively, this means that the problem is solvable in uniformly polynomial time for each constant value of $k$. Parameterized complexity is well-established and has been successfully applied for non-clashing teaching~\cite{ChalopinCIR24} as well as in a variety of related subfields of learning theory~\cite{DF93,LL18,GK21,OrdyniakS21,BGS23,EibenGKOS23,EibenOPS23}.

In their previous work, Chalopin, Chepoi, Mc~Inerney, and Ratel~\cite{ChalopinCIR24} established the fixed-parameter tractability of \strictp\ when parameterized by the \emph{vertex cover number} of the input graph $G$---or, equivalently, the vertex deletion distance to a graph consisting only of isolated vertices. The drawback of that result is that the vertex cover number is a highly ``restrictive'' parameter, in the sense that it achieves low values only on rather simple graphs. This is also reflected in the open question posed in that article, which asked about the problem's complexity under other parameterizations. As our third contribution, we establish (in Theorem~\ref{thm:fpt_vi}) the fixed-parameter tractability of \normalp---\emph{i.e.}, the more general task of computing the non-clashing teaching dimension when not all concepts (\emph{i.e.}, balls) need to be present---parameterized by the \emph{vertex integrity} of $G$. Vertex integrity is a well-studied graph parameter~\cite{DrangeDH16,GimaHKKO22,GimaO24,HanakaLVY24} that essentially captures the vertex deletion distance to a graph consisting only of small connected components; it is known (and easily observed) to be a less restrictive parameterization than the vertex cover number (cf. Section~\ref{sec:prelims}), meaning that our result significantly pushes the boundaries of tractability even for the simpler strict variant of the problem.

The proof of Theorem~\ref{thm:fpt_vi} is highly non-trivial. In particular, while the algorithm itself is simple and merely uses a data reduction technique (``\emph{kernelization}''~\cite{CyganFKLMPPS15}) that iteratively removes certain parts of the instance, the crucial correctness proof underlying the result is very involved and relies on identifying a carefully defined set of ``canonical'' examples for our instances. The algorithm is also constructive, meaning that it can output a set of examples for the concepts as a witness.

As our final contribution, in Theorem~\ref{thm:hard-fvs} we complement Theorem~\ref{thm:fpt_vi} with a complexity-theoretic lower bound excluding fixed-parameter tractability under many other graph parameters previously considered in the literature, including \emph{pathwidth}, \emph{treewidth}, and the \emph{feedback vertex number}---the latter two of which were explicitly mentioned in the aforementioned open question~\cite{ChalopinCIR24}. We do so through a complex \W[1]-\emph{hardness reduction} (which can be seen as a parameterized analog to classical reductions used to establish \NP-hardness) that excludes, under well-established complexity assumptions, \normalp\ from being in $\mathsf{FPT}$ even when combining all the parameterizations mentioned in the previous sentence with the positive non-clashing teaching dimension~$k$.

\paragraph*{Related Work.}
It is known that \normalp\ is significantly more challenging than the special case captured by \strictp. For instance, the reduction of Kirkpatrick, Simon, and Zilles~\cite[Subsection 7.1]{KSZ19} establishes that \normalp\ is \NP-hard even if the task is to determine whether the positive non-clashing teaching dimension of the instance is~$1$; on the other hand, the analogous question for \strictp\ is trivial as it simply requires determining whether the input graph is edgeless or not~\cite{ChalopinCIR24}. In fact, unless $\P=\NP$, that reduction also rules out a polynomial-time $1.999$-approximation algorithm. For clarity, note that while that reduction does not consider concepts that are balls in a graph, as mentioned earlier, \emph{every} finite binary concept class can be easily transformed into a class of balls in a graph~\cite{ChChMc,ChalopinCIR24}.

Apart from the computational questions resolved in this work, another prominent open question is whether the non-clashing teaching dimension is upper-bounded by the VC-dimension~\cite{KSZ19,FKS23,Si}. It is known that the non-clashing teaching dimension (where one allows negative examples) can be significantly smaller than the positive variant, {\it e.g.}, balls in cycles have a non-clashing teaching dimension of $2$, but their positive non-clashing teaching dimension is not bounded by any constant~\cite{ChalopinCIR24}. It was also pointed out that balls in cacti or planar graphs could be good candidates for concept classes negatively answering this question~\cite{ChalopinCIR24}. Further, Simon~\cite{Si} recently explored the relationship between non-clashing teaching and \emph{recursive teaching} and identified the precise gap between the two notions.

Concept classes consisting of balls in a graph are a discrete analog of the geometric concept classes of balls in a Euclidean space which have been investigated in PAC-learning, {\it e.g.}, as part of the more general Dudley concept classes~\cite{Fl,BDLi}. Apart from non-clashing teaching, they have also been explored for the closely related and well-studied sample compression schemes~\cite{ChChMc} introduced by Littlestone and Warmuth~\cite{LiWa}. As discussed in prior works~\cite{KSZ19,FKS23,ChalopinCIR24}, non-clashing teaching maps can be viewed as signed variants of representation maps for concept classes, a notion introduced to design unlabeled sample compression schemes for maximum concept classes~\cite{KuWa} (and subsequently the more general ample concept classes~\cite{ChChMoWa}).

\section{Preliminaries}
\label{sec:prelims}
We assume familiarity with graph terminology~\cite{Diestel}. We only consider simple, finite, and undirected graphs. For an integer $n \geq 1$, we set $[n] := \{1, \dots, n\}$. 
As we only consider finite binary concept classes which can be represented as balls in graphs~\cite{ChChMc,ChalopinCIR24}, we introduce the terminology for positive non-clashing teaching directly in the setting of graphs.

\subparagraph*{Positive Non-Clashing Teaching in Graphs.}
Let $G$ be a graph. For an integer $r\geq 0$ and a vertex $v\in V(G)$, the \emph{ball} $B_r(v)$ is the set of all vertices at distance at most $r$ from its \emph{center} $v$. Let $\mathcal{B}$ be a set of balls of $G$. A \emph{positive teaching map} $T$ for $\mathcal{B}$ is a mapping which assigns to each ball $B\in \mathcal{B}$ a \emph{teaching set} $T(B)\subseteq B$, \emph{i.e.}, a subset of the vertices of $B$. The \emph{dimension} of $T$ is $\max_{B\in \mathcal{B}}|T(B)|$---in other words, the largest image of $T$. A positive teaching map $T$ is called \emph{non-clashing} for $\mathcal{B}$ if for each pair of distinct balls $B_1, B_2\in \mathcal{B}$, there exists a vertex $w\in T(B_1)\cup T(B_2)$ such that $w\not \in B_1\cap B_2$. Note that~$w$ must lie in $B_1\cup B_2$ by definition, and hence, this condition ensures that one of the balls has a teaching set which is not contained in the other ball. We say that $w$ \emph{distinguishes} $B_1$ and $B_2$, or distinguishes $B_1$ from $B_2$ (or vice versa). If a teaching map is not non-clashing, we say that there is a \emph{conflict} between any two balls for which there is no element distinguishing them.
We now define our problems of interest:\footnote{While we use decision variants, our algorithms are constructive and can output a teaching map as a~witness.}

\begin{problem}
   \probtitle{\strictp}   
   \probinput{A graph $G$ and an integer $k$.}
   \probtask{Is there a positive non-clashing teaching map for the set of all balls of $G$ with dimension at most $k$?}
\end{problem}
\begin{problem}
   \probtitle{\normalp}   
   \probinput{A graph $G$, a set $\mathcal{B}$ of balls of $G$, and an integer $k$.}
   \probtask{Is there a positive non-clashing teaching map for $\mathcal{B}$ with dimension at most $k$?}
\end{problem}

We call a teaching map satisfying the conditions of the respective problem statement a \emph{solution}. To avoid any confusion, we remark that the above definitions---as well as every result obtained in this article---concerns non-clashing teaching in the previously studied \emph{positive} setting.

\subparagraph*{Parameterized Complexity.}
In parameterized
complexity~\cite{DowneyF13,CyganFKLMPPS15}, the
running-times of algorithms are studied with respect to a parameter
$p\in \mathbb{N}$ and input size~$n$. 
It is normally used for \NP-hard problems, with the aim of finding a parameter describing a feature of the instance such that the combinatorial explosion is confined to this parameter. 
A parameterized problem is \emph{fixed-parameter tractable} (\FPT) if it can be solved by an algorithm running in time $f(p)\cdot n^{\bigoh(1)}$, where $f$ is a computable function; these are \emph{fixed-parameter algorithms}.

Proving that a problem is $\W[1]$-hard via a \emph{parameterized reduction} from a $\W[1]$-hard problem $\mathcal{P}$ rules out the existence of a fixed-parameter algorithm under the well-established hypothesis that $\W[1]\neq \FPT$.
 A parameterized reduction from $\mathcal{P}$ to a parameterized problem $\mathcal{Q}$ is a function:
\begin{itemize}
\item which maps \textbf{YES}-instances to \textbf{YES}-instances and \textbf{NO}-instances to \textbf{NO}-instances,
\item is computable in time $f(p)\cdot
n^{\bigoh(1)}$, where $f$ is a computable function, and
\item where the parameter of the output instance can be upper-bounded by some function of the parameter of the input instance.
\end{itemize}

\strictp\ is known to be fixed-parameter tractable when parameterized by the \emph{vertex cover number} of $G$, \emph{i.e.}, the minimum integer $a$ such that 
there is a subset $X\subset V(G)$ of at most $a$ vertices where $G-X$ is an edgeless graph.
In this article, we consider three parameters which are upper-bounded by the vertex cover number (or, more precisely, the vertex cover number plus one):

\begin{itemize}
\item the \emph{vertex integrity} of $G$, which is the minimum integer $b$ such that there is a vertex subset $X\subset V(G)$ where for every connected component $H$ of $G-X$,  $|V(H)\cup X|\leq b$;
\item the \emph{feedback vertex number} of $G$ (denoted by $\fvs(G)$), which is the minimum integer $c$ such that there is a vertex subset $X\subset V(G)$ where $G-X$ is acyclic;
\item the \emph{pathwidth} of $G$ (denoted by $\pw(G)$), which has a more involved definition based on the notion of \emph{path decompositions}. However, for the purposes of this article it is sufficient to note the well-known facts~\cite{DowneyF13,CyganFKLMPPS15} that deleting one vertex from each connected component of $G$ will decrease the pathwidth by at most one, and that a graph consisting of a disjoint union of paths and \emph{subdivided caterpillars} (\emph{i.e.}, graphs consisting of a central path with pendent paths attached to it) has pathwidth~$2$.
\end{itemize}

\section{Intractability and Running Time Lower Bounds}
In this section, we establish the \NP-hardness of \strictp\ when $k=2$, and thus, that it cannot be $1.499$-approximated in polynomial time unless $\P=\NP$; in fact, our results hold even when the graphs belong to the class of \emph{split graphs}, \emph{i.e.}, graphs which can be partitioned into an independent set and a clique.
Recall that the former result is tight in the sense that \strictp\ is trivial when $k=1$ \cite{ChalopinCIR24}.
We formalize the result below.

\begin{theorem}\label{thm:split2}
	\strictp\ is \NP-hard even when restricted to split graphs with $k=2$.
\end{theorem}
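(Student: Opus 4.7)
My plan for proving Theorem~\ref{thm:split2} is to give a polynomial-time reduction from a suitably chosen \NP-hard constraint satisfaction problem (a natural target would be \textsc{3-SAT} or \textsc{Monotone 1-in-3-SAT}) to \strictp\ on split graphs with $k=2$. The split graph $G$ will be built from a clique part $K$ and an independent set part $I$; vertices in $K$ serve as ``global coordinate'' vertices (every ball of radius $\geq 1$ centered in $K$, as well as every ball of radius $\geq 2$, contains essentially all of $K$ once $G$ is connected), while vertices in $I$ act as ``local witnesses'' whose presence in specific balls encodes the structure of the instance.

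For each variable $x_i$, I would construct a \textbf{variable gadget} consisting of a carefully chosen constant-size set of vertices in $K\cup I$ together with a handful of associated balls, whose teaching sets of size at most $2$ are forced into one of exactly two configurations encoding $x_i=\true$ or $x_i=\false$. For each clause, a \textbf{clause gadget} would be designed so that the non-clashing constraints between its balls and the balls of the incident variable gadgets can be simultaneously satisfied if and only if at least one literal is set to \true. The crucial leverage for making everything fit inside the budget $k=2$ comes from nested ball pairs $B_1\subseteq B_2$: non-clashing forces $T(B_2)$ to contain a vertex of $B_2\setminus B_1$, which pins down the teaching set in each gadget up to the intended boolean choice. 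Using the classification of balls in a connected split graph (namely $B_1(u)=K\cup N_I(u)$ and $B_{\geq 2}(u)=V(G)$ for $u\in K$; $B_1(v)=\{v\}\cup N_K(v)$ and $B_2(v)=\{v\}\cup K\cup \{u\in I:N_K(u)\cap N_K(v)\neq\emptyset\}$ for $v\in I$), the gadget balls and their teaching sets can be engineered so that the only freedom in the map corresponds precisely to the variable/clause decisions.

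The main obstacle, and where the bulk of the proof must go, is the \emph{only if} direction: showing that \emph{every} teaching map with dimension at most $2$ yields a satisfying assignment. Because the number of balls in a split graph is large (one singleton $B_0(w)$ per vertex, plus $B_1(u)$ for $u\in K$, and $B_1(v), B_2(v)$ for $v\in I$), and because $k=2$ leaves essentially no slack per ball, many pairwise distinguishing constraints propagate globally and small perturbations to the ``intended'' teaching map can create new conflicts elsewhere. The hard part will be to rule out all unintended solutions via a meticulous case analysis over the possible $2$-element subsets of each gadget ball, using the global coordinates in $K$ to argue that any deviation in one gadget forces an irreparable conflict with another. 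In the \emph{if} direction, a satisfying assignment translates explicitly into a teaching map of dimension $2$, which can be verified gadget-pair by gadget-pair using the ball descriptions above.

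Finally, the construction is of polynomial size and is manifestly a split graph, giving \NP-hardness of \strictp\ restricted to split graphs with $k=2$; because the $k=1$ case is trivial and polynomial-time solvable~\cite{ChalopinCIR24}, the same reduction immediately yields the stated $1.499$-approximation lower bound.
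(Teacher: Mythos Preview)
Your plan is precisely the paper's approach: a polynomial-time reduction from 3-\SAT\ to \strictp\ on a split graph with $k=2$, exploiting nested radius-$1$ balls so that each key teaching set has one coordinate forced and one binary choice left (encoding a truth value), with clause gadgets whose non-clash constraints against the variable balls encode satisfaction. The paper's concrete realization is a five-vertex ``force gadget'' per variable and per clause (three clique vertices $v^{*},v^{**},v^{***}$ and two independent-set vertices $v',v''$, together with a universal vertex making the diameter~$2$) that creates exactly the two nestings $B_1(v^{***})\subsetneq B_1(v^{*})$ and $B_1(v^{**})\subsetneq B_1(v^{*})$ you anticipate; with this gadget in hand the backward direction is in fact short and clean, and it is the \emph{forward} direction---exhibiting and verifying a full teaching map of dimension~$2$ for all ball pairs---that carries most of the bookkeeping.
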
   

We prove Theorem~\ref{thm:split2} via a polynomial-time reduction that, given an instance of 3-\SAT, constructs an equivalent instance $(G,k)$ of \strictp, where $G$ is a split graph and $k=2$.

\defquestion{3-\SAT}{A CNF formula over a set of clauses $\mathcal{C}=\{c_1, \dots, c_{m}\}$ containing variables from $\mathcal{X}= \{x_1, \dots, x_n\}$, where each clause has exactly $3$ literals.}{Is there a variable assignment $\tau: \mathcal{X} \rightarrow \{\true, \false\}$ satisfying each clause in~$\mathcal{C}$?}

\noindent\textit{Reduction.}
	Given an instance $\phi=(\mathcal{C}, \mathcal{X})$ of 3-\SAT, we construct the graph~$G$ as follows (see Figure~\ref{fig:reduction_np} for an illustration).
	\begin{figure}[ht]
		\centering
		\subcaptionbox{The force-gadget. The dotted edge corresponds to the absence of that edge.\label{fig:gadget}}[.2\textwidth]{
		\scalebox{.78}{\begin{tikzpicture}

\node[circ, label=left:{$v'$}] (s'k) at (0, 1.5) {};
\node[circ, label=left:{$v''$}] (s''k) at (0, .5) {};
\node[circ, label=above:{$v^*$}] (s*k) at (1, 2) {};
\node[circ, label=above:{$v^{**}$}] (s**k) at (1, 1) {};
\node[circ, label=above:{$v^{***}$}] (s***k) at (1, 0) {};
\draw (s'k) -- (s*k);
\draw (s'k) -- (s**k);
\draw[dotted] (s'k) -- (s***k);
\draw (s''k) -- (s*k);
\draw (s''k) -- (s**k);
\draw (s''k) -- (s***k);
\end{tikzpicture}}}
		\hfill
		\subcaptionbox{An example of the graph $G$ obtained by applying our reduction on the 3-\SAT\ instance with $\mathcal{X} = \{x_i, x_j, x_q, x_p\}$ and $\mathcal{C} = \{(x_i\vee x_j\vee \overline{x_q}), (x_j\vee x_q\vee \overline{x_p})\}$. Vertices in ovals form independent sets, while cliques are depicted by rectangles. Blue edges denote the existence of all possible edges between the two sets. \label{fig:reduction_np}}[.75\textwidth]{
			\scalebox{.7}{\begin{tikzpicture}
\def\y{3}\def\x{-.5}

\node[circ, label=left:{$s'_k$}] (s'k) at (\x, \y+6.5) {};
\node[circ, label=left:{$s''_k$}] (s''k) at (\x, \y+5.5) {};
\node[circ, label=above:{$s^*_k$}] (s*k) at (1, \y+7) {};
\node[circ, label=above:{$s^{**}_k$}] (s**k) at (1, \y+6) {};
\node[circ, label=above:{$s^{***}_k$}] (s***k) at (1, \y+5) {};

\node[circ, label=left:{$s'_l$}] (s'l) at (\x, \y+1.5) {};
\node[circ, label=left:{$s''_l$}] (s''l) at (\x, \y+.5) {};
\node[circ, label=above:{$s^*_l$}] (s*l) at (1, \y+2) {};
\node[circ, label=above:{$s^{**}_l$}] (s**l) at (1, \y+1) {};
\node[circ, label=above:{$s^{***}_l$}] (s***l) at (1, \y+0) {};

\foreach \v in {k,l} {
	\draw (s'\v) -- (s*\v);
	\draw (s'\v) -- (s**\v);
%	\draw[dotted] (s'\v) -- (s***\v);
	\draw (s''\v) -- (s*\v);
	\draw (s''\v) -- (s**\v);
	\draw (s''\v) -- (s***\v);
}

\node[circ, label=right:{$r'_i$}] (r'i) at (7.5, \y+7.5) {};
\node[circ, label=right:{$r''_i$}] (r''i) at (7.5, \y+6.5) {};
\node[circ, label=above:{$r^*_i$}] (r*i) at (6, \y+8) {};
\node[circ, label=above:{$r^{**}_i$}] (r**i) at (6, \y+7) {};
\node[circ, label=above:{$r^{***}_i$}] (r***i) at (6, \y+6) {};

\node[circ, label=right:{$r'_j$}] (r'j) at (7.5, \y+4.5) {};
\node[circ, label=right:{$r''_j$}] (r''j) at (7.5, \y+3.5) {};
\node[circ, label=above:{$r^*_j$}] (r*j) at (6, \y+5) {};
\node[circ, label=above:{$r^{**}_j$}] (r**j) at (6, \y+4) {};
\node[circ, label=above:{$r^{***}_j$}] (r***j) at (6, \y+3) {};

\node[circ, label=right:{$r'_q$}] (r'q) at (7.5, \y+1.5) {};
\node[circ, label=right:{$r''_q$}] (r''q) at (7.5, \y+.5) {};
\node[circ, label=above:{$r^*_q$}] (r*q) at (6, \y+2) {};
\node[circ, label=above:{$r^{**}_q$}] (r**q) at (6, \y+1) {};
\node[circ, label=above:{$r^{***}_q$}] (r***q) at (6, \y+0) {};

\node[circ, label=right:{$r'_p$}] (r'p) at (7.5, \y-1.5) {};
\node[circ, label=right:{$r''_p$}] (r''p) at (7.5, \y-2.5) {};
\node[circ, label=above:{$r^*_p$}] (r*p) at (6, \y-1) {};
\node[circ, label=above:{$r^{**}_p$}] (r**p) at (6, \y-2) {};
\node[circ, label=above:{$r^{***}_p$}] (r***p) at (6, \y-3) {};

\foreach \v in {i,j,q,p} {
	\draw (r'\v) -- (r*\v);
	\draw (r'\v) -- (r**\v);
%	\draw[dotted] (r'\v) -- (r***\v);
	\draw (r''\v) -- (r*\v);
	\draw (r''\v) -- (r**\v);
	\draw (r''\v) -- (r***\v);
}

\node[circ, label=above:{$t_i$}] (ti) at (3.5, \y+7) {};
\node[circ, label=above:{$f_i$}] (fi) at (3.5, \y+6) {};
\node[circ, label=above:{$t_j$}] (tj) at (3.5, \y+4.5) {};
\node[circ, label=above:{$f_j$}] (fj) at (3.5, \y+3.5) {};
\node[circ, label=above:{$t_q$}] (tq) at (3.5, \y+2) {};
\node[circ, label=above:{$f_q$}] (fq) at (3.5, \y+1) {};
\node[circ, label=above:{$t_p$}] (tp) at (3.5, \y-.5) {};
\node[circ, label=above:{$f_p$}] (fp) at (3.5, \y-1.5) {};

\foreach \v in {i,j,q,p} {
	\draw (t\v) -- (r***\v);
	\draw (f\v) -- (r***\v);
	\draw (t\v) -- (r*\v);
	\draw (f\v) -- (r*\v);
%	\draw[dotted] (t\v) -- (r**\v);
%	\draw[dotted] (f\v) -- (r**\v);
}

\draw (s*k) -- (fi);
\draw (s***k) -- (fi);
\draw (s*k) -- (fj);
\draw (s***k) -- (fj);
\draw (s*k) -- (tq);
\draw (s***k) -- (tq);

\draw (s*l) -- (fq);
\draw (s***l) -- (fq);
\draw (s*l) -- (fj);
\draw (s***l) -- (fj);
\draw (s*l) -- (tp);
\draw (s***l) -- (tp);

\node[circ, label=left:{$s'_0$}] (s'0) at (\x, \y-.5) {};
\node[circ, label=right:{$r'_0$}] (r'0) at (7.5, \y-3.5) {};

\node[ellipse, color = gray, label=above:{$S'$}, draw, fit={(s''l) (s''k) (-.8,5)},inner sep=4mm] (S') {}; 
%\node[ellipse, color = gray, label=above:{$S^*$}, draw, fit={(s**k) (s**l)},inner sep=4mm] (S*) {}; 
\node[ellipse, color = gray, label=above:{$R'$}, draw, fit={(7.5, \y-2) (r''i) (8,5)},inner sep=3mm] (R') {}; 
%\node[ellipse, color = gray, label=above:{$R^*$}, draw, fit={(r**i) (r*p)},inner sep=5mm] (R*) {};
\node[ellipse, color = gray, label=above:{$A$}, draw, fit={(fi) (tp)},inner sep=5mm] (A) {};

\node [label=above:{$S^*$}] (S*) at (1, 6.75) {
	\tikz {\draw[color = gray] (0, 0) rectangle ++(1,8.25);}
};
\node [label=above:{$R^*$}] (R*) at (6, 5.5) {
	\tikz {\draw[color = gray] (0, 0) rectangle ++(1,12.25);}
};

\node[circ, label=left:{$a$}] (a) at (0, 12.5) {};

\draw [blue] ($(S'.south)$) to [bend right=30] ($(R*.south)+(0,0.12)$);
\draw [blue] ($(R'.south)$) to [bend left=43] ($(S*.south)+(0,0.12)$);
\draw [blue] ($(R*.south)+(0,0.12)$) to [bend left=40] ($(S*.south)+(0,0.12)$);

\draw [blue] (a) to [bend left=10] ($(R'.north)$);
\draw [blue] (a) to [bend left=15] ($(R*.north)-(0,0.12)$);
\draw [blue] (a) to [bend left=10] ($(A.north)$);
\draw [blue] (a) to [bend left=20] ($(S'.north)$);
\draw [blue] (a) to [bend right=20] ($(S*.north)-(0,0.12)$);
\end{tikzpicture}}}
	\end{figure}
	\begin{itemize}
		\item First, for each $i\in [n]$, we create a pair of vertices $\{t_i, f_i\}$. We set $A:=\{t_i,f_i\}_{i\in [n]}$.
		\item For each $i\in [n]$, we introduce a \emph{variable force-gadget}, which consists of a set of vertices $\{r_i^{*}, r_i^{**}, r_i^{***}, r_i', r_i''\}$ and edges as depicted in Figure~\ref{fig:gadget}.
		\item For each $i\in [n]$, we attach the variable force-gadget to the pair $\{t_i, f_i\}$ as shown in Figure~\ref{fig:reduction_np}, by making both $r_i^*$ and $r_i^{***}$ adjacent to both $t_i$ and $f_i$. This gadget will guarantee that the corresponding assignment of the $i^{\text{th}}$ variable is well-defined.
		We set $R^*:=\{r_i^{*}, r_i^{**}, r_i^{***}\}_{i\in [n]}$ and $R':=\{r_0'\}\cup\{r_i', r_i''\}_{i\in [n]}$, where $r_0'$ is a new vertex.
		\item Similarly, for each $k\in [m]$, we introduce a \emph{clause force-gadget} on the set of new vertices $\{s_k^{*}, s_k^{**}, s_k^{***}, s_k', s_k''\}$ (as depicted in Figure~\ref{fig:gadget}).
		This gadget corresponds to the clause $c_k$ of the instance $\phi$.
		We add adjacencies according to the appearance of literals in $c_k$, \emph{i.e.}, if $x_i\in c_k$ for some $i\in [n]$, then we connect both $s_k^{*}$ and $s_k^{***}$ to~$f_i$; and if $\overline{x_i}\in c_k$, then we connect both $s_k^{*}$ and $s_j^{***}$ to $t_i$. Intuitively, we connect the gadget to those vertices whose underlying assignments (\true\ or \false\ for $t_i$ and $f_i$, resp.) \textbf{do not} satisfy $c_j$, while the opposite assignments would (see Figure~\ref{fig:reduction_np}).
		We set $S^*:=\{s_k^{*}, s_k^{**}, s_k^{***}\}_{k\in [m]}$ and $S':=\{s_0'\}\cup\{s_k', s_k''\}_{k\in [m]}$, where $s_0'$ is a new vertex.
		\item We add all possible edges between (a)~$S^*$ and $R^*$; (b)~$R^*$ and $S'$; (c)~$S^*$ and $R'$.
		\item We add all possible edges within $S^*$, and within $R^*$.
		\item Lastly, we add a special vertex $a$ and make it adjacent to all the other vertices of the graph.
	\end{itemize}
This concludes the construction of $G$; given an instance $\phi=(\mathcal{C}, \mathcal{X})$ of 3-\SAT, the reduction outputs the \strictp instance $(G, 2)$. We now prove its correctness via the next two lemmas.

\begin{lemma}
	\label{lemma:correcntes1}
	If $\phi$ is a YES-instance of 3-\SAT, then $(G,2)$ is a YES-instance of \strictp.
\end{lemma}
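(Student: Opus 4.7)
Let $\tau$ be a satisfying assignment of $\phi$, and write $A_\tau := \{t_i : \tau(x_i) = \true\} \cup \{f_i : \tau(x_i) = \false\}$. The plan is to construct a positive non-clashing teaching map $T$ of dimension at most $2$ for the set of all balls of $G$.

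First, since $a$ is adjacent to every other vertex of $G$, the diameter of $G$ is $2$. Hence, the distinct balls of $G$ are the singletons $\{v\} = B_0(v)$, the closed neighborhoods $B_1(v)$, and the universal ball $V(G) = B_1(a) = B_r(u)$ for every $u \in V(G)$ and $r \geq 2$. I set $T(\{v\}) := \{v\}$ for every singleton and $T(V(G)) := \{r_0', s_0'\}$. For every non-universal ball $B_1(v)$, the teaching set is a carefully chosen pair that depends on the gadget containing $v$ and on $\tau$: balls associated with clause gadget $k$ pair an ``indicator'' vertex from $\{s_k', s_k''\}$ with a satisfying literal vertex $\ell_k \in A_\tau$ of $c_k$, which exists because $\tau$ satisfies $c_k$ and, crucially, satisfies $\ell_k \notin B_1(s_k^*) \cup B_1(s_k^{***})$, since $s_k^*, s_k^{***}$ are adjacent only to the non-satisfying literals of $c_k$. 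An analogous scheme is used for the variable gadgets, and the balls $B_1(t_i), B_1(f_i)$ pair the literal vertex itself with a gadget vertex whose choice reflects $\tau(x_i)$.

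Non-clashing is then verified by a case analysis over pairs of distinct balls. Pairs involving singletons or the universal ball are immediate: $T(\{v\}) = \{v\}$ settles singleton-versus-singleton and singleton-versus-$B_1$ pairs, while $N[r_0'] = \{r_0', a\} \cup S^*$ and $N[s_0'] = \{s_0', a\} \cup R^*$ intersect only in $\{a\}$, so at least one of $r_0', s_0'$ lies outside every non-universal $B_1(v)$, distinguishing the universal ball from every such ball. For pairs of non-universal $B_1$-balls, same-gadget pairs are distinguished using the asymmetry of the force-gadget (the absence of the edge $v'v^{***}$), while cross-gadget pairs are distinguished by an indicator vertex $r_i', r_i'', s_k'$, or $s_k''$ that lies in a $B_1$-ball only when the center is in the matching gadget.

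The most delicate subcase, and main obstacle, concerns pairs $B_1(s_k^\bullet), B_1(s_{k'}^\bullet)$ coming from distinct clause gadgets (and symmetrically, distinct variable gadgets), where the two balls differ both on the indicator side and on the $A$ side. Here the satisfying literal vertices $\ell_k, \ell_{k'} \in A_\tau$ are essential: by construction, $\ell_k \notin B_1(s_k^\bullet)$, and $\ell_k \in B_1(s_{k'}^\bullet)$ precisely when $\ell_k$ is a non-satisfying literal of $c_{k'}$, so combined with the indicator vertex in the teaching set of one of the two balls, this produces a distinguisher in every case. This is exactly the step where the hypothesis that $\tau$ satisfies every clause is used, and without it the required asymmetry between clause gadgets would collapse.
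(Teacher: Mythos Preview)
Your plan contains a fatal positivity violation. You propose that the teaching set of each clause-gadget ball pairs an indicator with the satisfying literal vertex $\ell_k \in A_\tau$, and you explicitly note that $\ell_k \notin B_1(s_k^*) \cup B_1(s_k^{***})$. But a \emph{positive} teaching map requires $T(B)\subseteq B$, and in fact $\ell_k$ lies in \emph{none} of the five clause-gadget balls: the vertices $s_k^{**},s_k',s_k''$ have no neighbour in $A$ whatsoever, while $s_k^*,s_k^{***}$ are adjacent only to the three non-satisfying literal vertices of $c_k$. Hence $\ell_k$ cannot appear in any teaching set $T(B_1(s_k^\bullet))$, and your map is not a positive teaching map.

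You also misidentify the delicate case. Two distinct clause gadgets are easy, since $s_k'\notin B_1(s_{k'}^\bullet)$ whenever $k\neq k'$. The genuinely hard pair is $B_1(s_k^*)$ versus $B_1(r_i^*)$, and there your claim that an indicator vertex ``lies in a $B_1$-ball only when the center is in the matching gadget'' is false: by construction $S'$ is complete to $R^*$ and $R'$ is complete to $S^*$, so $s_k'\in B_1(r_i^*)$ and $r_i'\in B_1(s_k^*)$ for \emph{all} $i,k$. The paper handles this pair by doing the opposite of what you propose: it places into $T(B_1(s_k^*))$ the \emph{complementary} vertex (e.g.\ $f_i$ when $\tau(x_i)=\true$ satisfies $c_k$), which \emph{does} lie in $B_1(s_k^*)$, and into $T(B_1(r_i^*))$ the assignment vertex from $A_\tau$ (here $t_i$). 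The distinguisher then comes from the variable side: $t_i\notin B_1(s_k^*)$ precisely because $\tau(x_i)$ satisfies $c_k$, and when it does not, the complementary vertex chosen for $T(B_1(s_k^*))$ comes from a different variable and is absent from $B_1(r_i^*)$. This is where the SAT hypothesis is actually consumed, and your plan never reaches it.

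(As an aside, your choice $T(V(G))=\{r_0',s_0'\}$ is correct and in fact tidier than the paper's, which needs an auxiliary assumption on $\phi$ to justify its choice of $\{t_i,t_j\}$.)
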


\begin{proof}
	As $G$ contains a universal vertex $a$, its diameter is $2$, and thus, any ball of radius at least~$2$ contains $V(G)$. Hence, it is sufficient to consider balls of radius at most $2$. 	Let $\tau: \mathcal{X} \rightarrow \{\true, \false\}$ be an assignment of the variables satisfying the given 3-\SAT\ formula $\phi$. Let us define a positive teaching map $T$ of dimension $2$ for the set $\mathcal{B}$ of all the balls in $G$ as shown in Table~\ref{tab:map}. It remains to prove that $T$ is non-clashing for $\mathcal{B}$.
		\begin{table}[t]\renewcommand{\arraystretch}{1.2}
			\begin{tabular}{ l l l l }
				for $k\in [m]$ & \multicolumn{3}{l}{$T(B_1(s_k^*))=\{s'_k, v\}$ and $c_k$ is satisfied by $\tau(x_i)$ for some $i\in [n]$, where}\\
				& \multicolumn{3}{l}{\hspace{.9cm}if $\tau(x_i)=\true$, then $v=f_i$\hspace{1.2cm}($f_i\in B_1(s_k^*)\cap A$ since $x_i\in c_k$)}\\ 
				& \multicolumn{3}{l}{\hspace{.9cm}if $\tau(x_i)=\false$, then $v=t_i$\hspace{1.05cm}($t_i\in B_1(s_k^*)\cap A$ since $\overline{x_i}\in c_k$)}\\ 
				& $T(B_1(s_k^{**}))=\{s'_k, r_0'\}$ & $T(B_1(s_k^{***}))=\{s''_k, r'_0\}$& \\
				& $T(B_1(s_k'))=\{s_k', s_k^{**}\}$ & $T(B_1(s_k''))=\{s_k'', s_k^{***}\}$ &  $T(B_1(s'_0))=\{s'_0, a\}$\\\cline{2-4}
				for $i\in [n]$ & \multicolumn{3}{l}{$T(B_1(r_i^*))=\{r'_i, v\}$, where} \\
				& \multicolumn{3}{l}{\hspace{.9cm}if $\tau(x_i)=\true$, then $v=t_i$} \\
				& \multicolumn{3}{l}{\hspace{.9cm}if $\tau(f_i)=\false$, then $v=f_i$} \\
				& $T(B_1(r_i^{**}))=\{r'_i, s'_0\}$ & $T(B_1(r_i^{***}))=\{r''_i, s'_0\}$& \\
				& $T(B_1(r_i'))=\{r'_i, r^{**}_i\}$ & $T(B_1(r_i''))=\{r_i'', r^{
					***}_i\}$ & $T(B_1(r'_0))=\{r'_0, a\}$ \\\cline{2-4}
				for $i\in [n]$ & $T(B_1(t_i))=\{t_i, a\}$ & $T(B_1(f_i))=\{f_i, a\}$ &  \\\cline{2-4}
				&  \multicolumn{3}{l}{$T(V(G))=T(B_1(a))=\{t_i, t_{j}\}$, for $i, j\in [n]$, $i\neq j$ such that there is no} \\
				&  \multicolumn{3}{l}{ $k\in [m]$ where both $x_{i}$ and $x_j$ appear in $c_k$.\footnote{We can assume the existence of such a pair $i, j$, because introducing an artificial variable and a unique clause in which it occurs gives an equivalent instance. Indeed, the artificial variable appears only once and its assignment can be chosen so that the added clause is satisfied without the rest of the formula being affected.}}  
			\end{tabular}
			\caption{Positive teaching map for $\mathcal{B}$.}\label{tab:map}
		\end{table}	
	
	To this end, we will refer to Table~\ref{tab:check} which intuitively describes which vertices are used to distinguish each pair of balls of radius $1$. For any pair $u, z\in V(G)$, the corresponding entry in the table contains either $v$ or a given vertex $w\in V(G)$. If it contains a vertex $w\in V(G)$, observe that $w\in T(B_1(u))\cup T(B_1(z))$ by Table~\ref{tab:map}, and $w\notin B_1(u)\cap B_1(z)$. In the remaining cases, \emph{i.e.}, \emph{(1)}~$s^*_k, r_i^*$; \emph{(2)}~$s^{*}_k, s_k^{**}$; \emph{(3)}~$s^{*}_k, s_k'$; \emph{(4)}~$s^{*}_k, s_k''$, we need to make it explicit which vertex $v$ stands for. 
	For pairs \emph{(2)}~$s^{*}_k, s_k^{**}$; \emph{(3)}~$s^{*}_k, s_k'$; \emph{(4)}~$s^{*}_k, s_k''$, $v$ can be any vertex in $N(s_k^*)\cap A$ (by construction, there are $3$ such vertices) since the neighborhoods of each of $s_k^{**}$, $s_k'$, and $s_k''$ do not intersect $A$.
	
	For the last type of pair \emph{(1)}, we will use the fact that $\tau$ is a satisfying assignment.
	For $k\in[m]$ and $i\in [n]$, assume w.l.o.g. that $c_k$ is satisfied by the assignment $\tau(x_i)=\true$.
	Then, $T(B_1(s_k^*))=\{s_k', f_i\}$ and $T(B_1(r_i^*))=\{r_i', t_i\}$.
	With such an assignment of teaching sets, for any pair of type $s^*_k, r_i^*$ we have that $t_i\in T(B_1(s^*_k))\cup T(B_1(r^*_i))=\{s_k', f_i\}\cup \{r_i', t_i\}$, and $t_i \notin B_1(s^*_k) \cap B_1(r^*_i)$.
	
	\begin{table}[ht]\renewcommand{\arraystretch}{1.2}
		\subcaptionbox{Here $i, j\in [n]$, $k\in [m]$, and $v\in \{t_i, f_i\}\setminus N(s^*_k)$. According to Table~\ref{tab:map}, $\{t_i, f_i\}\setminus N(s^*_k)=t_i$ if $x_i=\true$ satisfies $c_k$; and $\{t_i, f_i\}\setminus N(s^*_k)=f_i$ if $x_i=\false$ satisfies $c_k$.
			\label{tab:main1}}{
			\begin{tabular}[t]{c|ccccc|c|cc}
				& $s_k^*$ & $s_k^{**}$ & $s_k^{***}$ & $s_k'$ & $s_k''$ & $s_0'$ & $t_i$ & $f_i$ \\\hline
				$r_i^*$		& $v$ & $r_0'$ & $r_0'$ & $r_i'$ & $r_i'$ & $r_i'$ & $r_i'$ & $r_i'$ \\ 
				$r_i^{**}$	& $s_0'$ & $r_0'$ & $r_0'$& $s_0'$ & $s_0'$ & $r_i'$ & $r_i'$ & $r_i'$ \\
				$r_i^{***}$	& $s_0'$ & $r_0'$ & $r_0'$ & $s_0'$ & $s_0'$ & $r_i''$ & $r_i''$ & $r_i''$ \\
				$r_i'$		& $s_k'$ & $r_0'$ & $r_0'$ & $r_i'$ & $r_i'$ & $r_i'$& $r_i'$ & $r_i'$ \\
				$r_i''$		& $s_k'$ & $r_0'$ & $r_0'$ & $r_i''$ & $r_i''$ & $r_i''$ & $r_i''$ & $r_i''$ \\\hline
				$r_0'$		& $s_k'$ & $s_k'$ & $s_k''$ & $s_k'$ & $s_k''$ & $r_0'$ & $r_0'$ & $r_0'$ \\\hline
				$t_j$		& $s_k'$ & $s_k'$ & $s_k''$ & $s_k'$ & $s_k''$ & $s_0'$ & $t_i$ & $f_i$ \\
				$f_j$		& $s_k'$ & $s_k'$ & $s_k''$ & $s_k'$ & $s_k''$ & $s_0'$ & $t_i$ & $f_i$ \\ 
		\end{tabular}}
		\hfill
		\subcaptionbox{Here, $k, l\in [m]$; filled cells correspond to the case $k = l$, and the others to $k\neq l$. Here, $v$ is any vertex in $N(s^*_l)\cap A$. The table for vertices of the variable force-gadgets is defined similarly, interchanging all $s$ and $r$ symbols.\label{tab:main2}}{
			\begin{tabular}[t]{c|ccccc}
				& $s_k^*$ & $s_k^{**}$ & $s_k^{***}$ & $s_k'$ & $s_k''$ \\\hline
				$s_l^*$		& $s_k'$ & \cg $v$ & \cg $s_k'$ & \cg $v$ & \cg $v$ \\ 
				$s_l^{**}$	& $s_k'$ & $s_k'$ & \cg $s_k'$& \cg $r_0'$ & \cg $r_0'$ \\
				$s_l^{***}$	& $s_k'$ & $s_k'$ & $s_k''$ & \cg $r_0'$ & \cg $r_0'$ \\
				$s_l'$		& $s_k'$ & $s_k'$ & $s_k''$ & $s_k'$ & \cg $s_k'$ \\
				$s_l''$		& $s_k'$ & $s_k'$ & $s_k''$ & $s_k'$ & $s_k''$ \\\hline
				$s_0'$		& $s_0'$ & $s_0'$ & $s_0'$ & $s_0'$ & $s_0'$
		\end{tabular}}\caption{For each $u, z\in V(G)$, in a cell at the intersection of the corresponding row and column, we place a vertex $w\in V(G)$ such that $w\in T(B_1(u))\cup T(B_1(z))$ and $w\notin B_1(u)\cup B_1(z)$ (according to the teaching map defined by Table~\ref{tab:map}).}\label{tab:check}
	\end{table}
	
	Finally, we check that the teaching set for $T(V(G))$ distinguishes $V(G)$ from all the other balls in $G$. According to Table~\ref{tab:map}, for $i, j\in [n]$ and $i\neq j$, $T(V(G))=\{t_i, t_j\}$ and there is no $k\in [m]$ such that both $x_i$ and $x_j$ appear in $c_k$. The last condition guarantees us that, for any $u\in V(G)\setminus \{a\}$, there is a vertex (either $t_i$ or $t_j$) that is in $T(V(G))$ but not in $B_1(u)$ as there is no clause force-gadget that would be attached to both $t_i$ and~$t_j$.
	
	Thus, we showed that for any pair of balls in $\mathcal{B}$, both necessary conditions for the defined teaching sets hold. Hence, the defined positive teaching map is non-clashing for $\mathcal{B}$.
\end{proof}

\begin{lemma}\label{lemma:correcntes2}
	If $(G,2)$ is a YES-instance of \strictp, then $\phi$ is a YES-instance of 3-\SAT.
\end{lemma}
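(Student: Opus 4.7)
The plan is to argue that any positive non-clashing teaching map $T$ of dimension at most $2$ for the set $\mathcal{B}$ of all balls of $G$ must impose a rigid structure on the teaching sets of the ``star'' vertices $r_i^*$ and $s_k^*$, from which a satisfying assignment for $\phi$ can be read off. Throughout, one may assume without loss of generality that no clause of $\phi$ is a tautology, and because $G$ has the universal vertex $a$, every ball of radius at least $2$ equals $V(G)$, so it suffices to analyse the balls $B_1(\cdot)$.

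First, I would pin down the teaching sets of the variable force-gadgets. The key observation is that $B_1(r_i^{**}) \subsetneq B_1(r_i^*)$ with $B_1(r_i^*) \setminus B_1(r_i^{**}) = \{t_i, f_i\}$, since $r_i^{**}$ has no attachment to $A$ while $r_i^*$ is adjacent to both $t_i$ and $f_i$; hence $T(B_1(r_i^{**}))$ is disjoint from $\{t_i,f_i\}$, and non-clashing forces $T(B_1(r_i^*)) \cap \{t_i,f_i\} \neq \emptyset$. Similarly, using the missing edge $r_i' r_i^{***}$ inside the gadget, one checks that $B_1(r_i^*) \triangle B_1(r_i^{***}) = \{r_i'\}$ with $r_i' \notin B_1(r_i^{***})$, hence $r_i' \in T(B_1(r_i^*))$. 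The dimension bound $|T(B_1(r_i^*))| \leq 2$ then yields $T(B_1(r_i^*)) = \{r_i', v_i\}$ for a unique $v_i \in \{t_i, f_i\}$. Set $\tau(x_i) := \true$ if $v_i = t_i$, and $\tau(x_i) := \false$ otherwise.

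A fully parallel analysis on each clause force-gadget yields $T(B_1(s_k^*)) = \{s_k', v'_k\}$ with $v'_k \in N(s_k^*) \cap A$: $B_1(s_k^*) \triangle B_1(s_k^{***}) = \{s_k'\}$ (using the missing edge $s_k' s_k^{***}$) forces $s_k' \in T(B_1(s_k^*))$, while $B_1(s_k^*) \triangle B_1(s_k^{**}) = N(s_k^*) \cap A$ forces a second teaching vertex among the ``falsifying literals'' of $c_k$, as $T(B_1(s_k^{**}))$ avoids $A$. It remains to show that $\tau$ satisfies every clause. Fix $c_k$ and suppose $v'_k = f_i$, so $x_i \in c_k$ and, by the no-tautology assumption, $s_k^* \not\sim t_i$ (the case $v'_k = t_i$ is symmetric). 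A direct inspection of the edges shows $\{s_k', r_i', f_i\} \subseteq B_1(s_k^*) \cap B_1(r_i^*)$, whereas $t_i \in B_1(r_i^*) \setminus B_1(s_k^*)$. Thus, the only element of $T(B_1(s_k^*)) \cup T(B_1(r_i^*)) = \{s_k', f_i, r_i', v_i\}$ that can possibly lie in the symmetric difference of $B_1(s_k^*)$ and $B_1(r_i^*)$ is $v_i$, so non-clashing forces $v_i = t_i$, that is, $\tau(x_i) = \true$, satisfying the positive occurrence of $x_i$ in $c_k$. Hence $\tau$ satisfies $\phi$.

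The main obstacle is maintaining a correct accounting of the many edge types involved (the cliques inside $S^*$ and $R^*$, the complete bipartite joins on the pairs $(S^*,R^*)$, $(R^*,S')$, and $(S^*,R')$, the gadget edges, the variable attachments linking $A$ to certain gadget vertices, and the universal vertex $a$) when computing the various ball intersections and symmetric differences. In particular, one must verify carefully that $s_k'$, $r_i'$, and $f_i$ all belong to $B_1(s_k^*) \cap B_1(r_i^*)$, so that none of them can serve as a distinguishing vertex for this pair, leaving $v_i$ as the only remaining candidate.
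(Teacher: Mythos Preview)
Your proposal is correct and follows essentially the same approach as the paper: you use the containments $B_1(r_i^{***})\subsetneq B_1(r_i^{*})$ and $B_1(r_i^{**})\subsetneq B_1(r_i^{*})$ (and their clause-gadget analogues) to pin down $T(B_1(r_i^*))=\{r_i',v_i\}$ and $T(B_1(s_k^*))=\{s_k',v'_k\}$, and then exploit the non-clashing condition for the pair $B_1(r_i^*),B_1(s_k^*)$ to force $v_i$ to be the literal satisfying $c_k$. The only cosmetic difference is that you make the no-tautology assumption explicit and phrase the final step via the symmetric difference rather than via the containment argument the paper uses, but the underlying mechanics are identical.
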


\begin{proof}
	Let $T$ be a positive non-clashing teaching map of dimension $2$ for the set $\mathcal{B}$ of all balls in~$G$.
	By definition, for each pair of distinct balls $B_1, B_2\in\mathcal{B}$, there is $w\in T(B_1) \cup T(B_2)$ such that $w\notin  B_1 \cap B_2$. 
	For each $i\in [n]$, consider a pair of vertices $r^*_i$ and $r^{***}_i$.
	By the construction, $B_1(r^{***}_i)\subset B_1(r^{*}_i)$ and $r_i'$ is the unique vertex in $B_1(r^{*}_i)$ that is not in $B_1(r^{***}_i)\cap B_1(r^{*}_i)$. Thus, since $T$ is a positive non-clashing teaching map for $\mathcal{B}$, $r_i'$ is in $T(B_1(r_i^*))$.
	Now, consider a pair of vertices $r^*_i$ and $r^{**}_i$. Similarly, $B_1(r^{**}_i)\subset B_1(r^{*}_i)$ and $\{t_i, f_i\}=B_1(r^{*}_i)\setminus B_1(r^{**}_i)$. So, for $T$ to distinguish $B_1(r^{**}_i)$ and $B_1(r^{*}_i)$, exactly one of $t_i$ and $f_i$ (as $T(B_1(r_i^*))$ already contains $r_i'$) is in $T(B_1(r_i^*))$.
	The same arguments work for clause force-gadgets, by symmetry of the construction, and we obtain that, for each $k\in [m]$, there is $v\in B_1(s_k^*)\cap A$ such that $T(B_1(s_k^*))=\{s_k', v\}$.
	
	Now, let us use the fact that for $i\in [n]$ and $k\in [m]$, the balls $B_1(r_i^*)$ and $B_1(s_k^*)$ are distinguished by $T$.
	If $B_1(s_k^*)\cap \{t_i, f_i\}=\emptyset$, whichever of $t_i$, $f_i$ that is in $B_1(r_i^*)$ distinguishes the two balls.
	However, as we have shown before, for each $k\in [m]$, $|T(B_1(s_k^*))\cap A|=1$. So, there exists $i\in [n]$ such that $B_1(s_k^*)\cap \{t_i, f_i\}\neq \emptyset$. W.l.o.g., let us assume that $B_1(s_k^*)\cap \{t_i, f_i\}=f_i$ (which means that $T(B_1(s_k^*))=\{s_k', f_i\}$).
	As a result, the only valid option for $B_1(r_i^*)$ to be distinguished from $B_1(s_k^*)$ is that $T(B_1(r_i^*))=\{r_i', t_i\}$. In the other symmetric case where $B_1(s_k^*)\cap \{t_i, f_i\}=t_i$, we obtain $T(B_1(r_i^*))=\{r_i', f_i\}$.
	
	Let us now define an assignment $\tau: \mathcal{X}\rightarrow \{\true, \false\}$ in the following way. For each $i\in [n]$, if $t_i\in T(B_1(r_i^*))$, we set $\tau(x_i)=\true$;
	otherwise $f_i\in T(B_1(r_i^*))$ and we set $\tau(x_i)=\false$. 
	Let us show that $\tau$ indeed satisfies the 3-\SAT\ instance $\phi$.
	As we discussed above, for each $k\in [m]$, $T(B_1(s_k^*))$ has an intersection with $A$ in exactly one vertex, w.l.o.g., let it again be that $T(B_1(s_k^*))\cap A = \{f_i\}$. Then, $T(B_1(r_i^*))\cap A = \{t_i\}$. So, $\tau(x_i)=\true$.
	By our reduction, $s_k^*$ is adjacent to $f_i$ if assigning $\false$ to $x_i$ \textbf{does not} satisfy $c_k$, while assigning $\true$ to $x_i$ would.
	Thus, the assignment $\tau$ satisfies all the $m$ clauses of the initial 3-\SAT\ instance $\phi$.	
\end{proof}

The proof of Theorem~\ref{thm:split2} then follows from Lemma~\ref{lemma:correcntes1} and Lemma~\ref{lemma:correcntes2}. In particular, they prove that there is a polynomial-time reduction which transforms any instance of 3-\SAT\ with $n$ variables and $m$ clauses into an equivalent instance $(G,2)$ of \strictp\ where $|V(G)|=\bigoh(n+m)$ and $G$ is a split graph of diameter~$2$. The properties of this reduction also allow us to infer more precise algorithmic lower bounds. In particular, since an algorithm solving \strictp\ in $2^{o(|V(G)|\cdot d\cdot k)}$ time would allow us to solve 3-\SAT\ in $2^{o(n+m)}$ time:

\begin{theorem}
\label{thm:ETHlower}
Unless the Exponential Time Hypothesis fails, there is no algorithm solving \strictp\ in time $2^{o(|V(G)|\cdot d\cdot k)}$, where $d$ and $k$ are the diameter of $G$ and the target positive non-clashing teaching dimension of the instance, respectively.
\end{theorem}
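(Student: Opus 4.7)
The plan is to derive this algorithmic lower bound directly from the reduction constructed in the proof of Theorem~\ref{thm:split2}. Under the Exponential Time Hypothesis, together with the Sparsification Lemma of Impagliazzo, Paturi, and Zane, 3-\SAT\ on instances with $n$ variables and $m$ clauses cannot be solved in $2^{o(n+m)}$ time. The proof will proceed by contraposition: assuming the existence of an algorithm $\mathcal{A}$ solving \strictp\ in $2^{o(|V(G)|\cdot d\cdot k)}$ time, I will use $\mathcal{A}$ together with the reduction of Theorem~\ref{thm:split2} to solve 3-\SAT\ in subexponential time.

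Concretely, given a 3-\SAT\ instance $\phi$ with $n$ variables and $m$ clauses, I would first invoke the polynomial-time reduction from Lemma~\ref{lemma:correcntes1} and Lemma~\ref{lemma:correcntes2} to obtain an equivalent \strictp\ instance $(G,2)$. The next step is to verify the three quantitative properties of the output that make the argument go through: (i)~$|V(G)|=\bigoh(n+m)$, which follows by counting the contributions of $A$, the $n$ variable force-gadgets, the $m$ clause force-gadgets, the vertices $r'_0,s'_0,a$, and nothing else; (ii)~the diameter $d$ of $G$ equals~$2$, since $a$ is a universal vertex; and (iii)~the target positive non-clashing teaching dimension of the instance is $k=2$ by construction.

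Running $\mathcal{A}$ on $(G,2)$ then takes time $2^{o(|V(G)|\cdot d\cdot k)} = 2^{o((n+m)\cdot 2\cdot 2)} = 2^{o(n+m)}$, which combined with Lemma~\ref{lemma:correcntes1} and Lemma~\ref{lemma:correcntes2} solves $\phi$ in subexponential time, contradicting the ETH. Since there is no genuine obstacle here beyond keeping track of the parameters of the reduction, the main care required is simply to confirm that the reduction outputs a split graph of diameter exactly $2$ with the linear vertex bound above; these follow immediately from the construction and were already implicitly used in proving Theorem~\ref{thm:split2}.
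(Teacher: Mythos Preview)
The proposal is correct and follows essentially the same approach as the paper, which simply notes that the reduction of Theorem~\ref{thm:split2} yields an instance with $|V(G)|=\bigoh(n+m)$, $d=2$, and $k=2$, so a $2^{o(|V(G)|\cdot d\cdot k)}$-time algorithm would solve 3-\SAT\ in $2^{o(n+m)}$ time. Your write-up is in fact slightly more careful than the paper's, explicitly invoking the Sparsification Lemma to justify the $2^{o(n+m)}$ lower bound for 3-\SAT\ under ETH.
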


We complement this lower bound with a refined upper bound for the more general \normalp:

\begin{proposition}\label{pro:exactalgo}
\normalp\ can be solved in $2^{\bigoh(|V(G)|\cdot d\cdot k\cdot \log |V(G)|)}$ time.
\end{proposition}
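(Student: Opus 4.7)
My plan is to prove Proposition~\ref{pro:exactalgo} by a straightforward brute-force enumeration over candidate teaching maps, with the savings over the previous bound coming from exploiting the size restriction $|T(B)| \leq k$. First, I would observe that the number of distinct balls in~$G$ is at most $n \cdot (d+1)$, where $n = |V(G)|$: every ball $B_r(v)$ is determined by its center $v \in V(G)$ and a radius $r \in \{0, 1, \dots, d\}$, since all radii $r \geq d$ yield the same ball (the connected component containing $v$). In particular, $|\mathcal{B}| \leq n \cdot (d+1)$.

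Second, I would enumerate all candidate positive teaching maps $T$ for $\mathcal{B}$ of dimension at most $k$. For each ball $B \in \mathcal{B}$, the teaching set $T(B) \subseteq B$ has at most $\sum_{i=0}^{k} \binom{n}{i} \leq (n+1)^k$ choices. Hence, the total number of candidate teaching maps is bounded by $\bigl((n+1)^k\bigr)^{|\mathcal{B}|} \leq (n+1)^{k \cdot n(d+1)} = 2^{\bigoh(n \cdot d \cdot k \cdot \log n)}$.

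For each candidate $T$, verifying whether it is non-clashing for~$\mathcal{B}$ takes polynomial time: for each of the $\bigoh(|\mathcal{B}|^2) = \bigoh(n^2 d^2)$ pairs of distinct balls $B_1, B_2 \in \mathcal{B}$, I would iterate over the (at most $2k$) vertices of $T(B_1) \cup T(B_2)$ and check whether any of them lies in the symmetric difference $B_1 \triangle B_2$. The total running time is therefore $2^{\bigoh(n \cdot d \cdot k \cdot \log n)} \cdot \mathrm{poly}(n, d, k)$, which is absorbed into the claimed bound.

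The main saving over the previous upper bound $2^{\bigoh(n^2 \cdot d)}$ of Chalopin, Chepoi, Mc~Inerney, and Ratel~\cite{ChalopinCIR24} comes precisely from restricting the enumeration to subsets of size at most~$k$: whereas enumerating arbitrary subsets per ball yields $2^n$ choices each, the size bound reduces this to $(n+1)^k$. Since this is the only essential ingredient, I anticipate no significant obstacle in carrying out the plan; the argument is entirely routine once one accounts for $|\mathcal{B}|$ and the number of size-$k$ subsets separately.
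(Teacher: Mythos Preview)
Your proposal is correct and follows essentially the same approach as the paper's own proof: both argue that there are at most $\bigoh(|V(G)|\cdot d)$ distinct balls, each admitting at most $|V(G)|^{\bigoh(k)}$ candidate teaching sets, and that non-clashing can be verified in polynomial time per candidate map. The only minor difference is that the paper first reduces to the connected case (so that the diameter $d$ is well-defined and finite), a point you implicitly assume but do not state.
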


\begin{proof}
We can assume that $G$ is connected, as otherwise we can solve \normalp\ independently on each of the connected components of $G$. For any $v\in V(G)$ and $r\in \mathbb{N}$, there are at most $\binom{|V(G)|}{k}=\bigoh(2^{k\cdot \log |V(G)|})$ possible choices for $T(B_r(v))$, and there are at most $\bigoh(|V(G)|\cdot d)$ unique balls in $G$. Due to the latter, for each possible teaching map, we can check in polynomial time whether it is a positive teaching map that satisfies the non-clashing teaching property. Thus, there is a brute-force algorithm running in $2^{\bigoh(|V(G)|\cdot d\cdot k\cdot \log |V(G)|)}$ time. 
\end{proof}

\section{Fixed-Parameter Tractability via Vertex Integrity}

Given that \normalp\ is \NP-hard, it is natural to ask whether the problem can be solved efficiently on inputs exhibiting some well-defined structural properties. In this section, we establish the fixed-parameter tractability of \normalp\ when parameterized by the vertex integrity of the input graph.
Consider an instance $(G,\mathcal{B},k)$ of \normalp\ and let $p$ be the vertex integrity of $G$. As the first step, we invoke the known algorithm to compute a ``witness'' for the vertex integrity in time $p^{\bigoh(p)}|V(G)|$~\cite{fellows1989immersion}, \emph{i.e.}, a set $X\subset V(G)$ such that $|V(H)\cup X|\leq p$ for each connected component $H$ of $G - X$. Let $\mathcal{H}$ denote the set of connected components of $G-X$.
To make use of the vertex integrity of $G$, we will partition the components of $\mathcal{H}$ into a parameter-bounded number of equivalence classes such that the elements belonging to the same class share some structural properties that will allow us to consider them, to some extent, interchangeable.

\begin{definition}\label{def:twinblocks}
Two subgraphs $H,H'\in \mathcal{H}$ are \emph{twin-blocks} with respect to $\mathcal{B}$, denoted $H\simB H'$, if there exists an isomorphism $\alpha_{H,H'}$ from $H$ to $H'$ with the following properties:
\begin{itemize}
\item 
for each $u\in V(H)$ and $v\in X$, $uv\in E(G)$ if and only if $\alpha_{H,H'}(u)v\in E(G)$, and
\item for each $u \in V(H)$ and $r\in \mathbb{N}$, $B_r(u)\in \mathcal{B}$ if and only if $B_r(\alpha_{H,H'}(u))\in \mathcal{B}$.
\end{itemize}
\end{definition}

Intuitively, $H\simB H'$ if and only if there is a bijection $\alpha_{H,H'}$ between the vertices of the two subgraphs which preserves (1) edges inside $H$ and $H'$, (2) edges to $X$, and (3) the existence of balls in $\mathcal{B}$ centered at the vertices of $H$ and $H'$. 
We refer to $\alpha_{H,H'}$ as the \emph{canonical isomorphism} between the two twin-blocks at hand, and if multiple choices of $\alpha$ exist, we choose and fix one arbitrarily; we drop the indices of $\alpha$ when the subgraphs are clear from the context. 
Observe that for any choice of $H$ and $H'$, $H\simB H'$ can be tested in time at most $p^{\bigoh(p)}$ by enumerating all possible choices of $\alpha$. 

Clearly, $\simB$ is an equivalence relation and we denote by $[H]\eqB$ the equivalence class containing $H$. For $u\in V(H)$, we further define $[u]\eqB=\{\alpha_{H,H'}(u)~|~H'\in [H]\eqB\}$, and similarly for $B_r(u)\in \mathcal{B}$, $[B_r(u)]\eqB=\{B_r(u')\in \mathcal{B}~|~u'\in [u]\eqB\}$; intuitively, these refer to the sets of counterparts of $u$ and $B_r(u)$ in the equivalence class, respectively. For brevity, we overload the notation $\simB$ and use $v\simB w$ (or $B_r(v) \simB B_r(w)$) to express that $v\in [w]\eqB$ (or $B_r(v) \in [B_r(w)]\eqB$, respectively).

\begin{observation}\label{obs:numbclass}
The number of equivalence classes on $\mathcal{H}$ defined by $\simB$ is at most $2^{\bigo(p^3)}$.
\end{observation}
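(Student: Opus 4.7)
The plan is to associate, with each equivalence class of $\simB$, a canonical signature of size $O(p^3)$ bits and then to count the number of such signatures. First I would fix an arbitrary total order on $X$ and, for each component $H \in \mathcal{H}$, choose a canonical ordering on $V(H)$ (for instance, the one minimizing the resulting signature in lexicographic order). The signature $\sigma(H)$ then consists of three parts: (i) the adjacency matrix of $H$ under the canonical ordering, using $O(p^2)$ bits since $|V(H)| \le p$; (ii) the bipartite adjacency matrix between canonically ordered $V(H)$ and the fixed-ordered $X$, using $O(p \cdot |X|) = O(p^2)$ bits; and (iii) for each canonically labeled $u \in V(H)$, a description of the ball-membership profile $R_u := \{r \in \mathbb{N} : B_r(u) \in \mathcal{B}\}$.

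Next, I would verify that $\sigma(H) = \sigma(H')$ holds if and only if $H \simB H'$. The canonical bijection between $V(H)$ and $V(H')$ implicit in matching signatures serves directly as the isomorphism $\alpha$ from Definition~\ref{def:twinblocks}, and each of its three conditions is verified by the corresponding component of the signature: (i) gives the graph isomorphism, (ii) gives the preservation of edges to $X$, and (iii) gives the preservation of ball-membership for every radius. The reverse direction is immediate from the canonical choice of ordering.

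Finally, I would count the signatures: item (i) admits at most $2^{O(p^2)}$ possibilities, and so does item (ii). For item (iii), I would argue that each $R_u$ has at most $2^{O(p^2)}$ relevant descriptions by reasoning that only those radii at which the ball $B_r(u)$ changes in a way influencing the twin-block equivalence need to be recorded; the $p$-bounded local structure around $H$ and $X$ restricts these to $O(p^2)$ radii per vertex. Combining the three counts gives $2^{O(p^2)} \cdot 2^{O(p^2)} \cdot (2^{O(p^2)})^p = 2^{O(p^3)}$ signatures overall, as required.

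The main obstacle is precisely this last step: a naive enumeration of $R_u$ scales with the graph diameter rather than the vertex-integrity parameter $p$, so tying the profile description down to $O(p^2)$ bits per vertex requires a careful structural argument isolating exactly which radii yield truly distinguishing balls through the small cut $X$. Once this bound on the ball-profile descriptions is in place, the overall count follows.
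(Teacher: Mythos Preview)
Your overall approach---encode each class by (i) the internal adjacency matrix, (ii) the bipartite adjacencies to $X$, and (iii) the ball-membership profile at each vertex---is exactly the paper's, and the first two counts are handled identically.

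The obstacle you flag in step (iii) is genuine as stated, but its resolution is far simpler than the ``careful structural argument isolating which radii yield truly distinguishing balls through the small cut $X$'' that you anticipate. The paper simply observes that the diameter of any connected graph with vertex integrity $p$ is $\bigo(p^2)$: vertex integrity is monotone under taking induced subgraphs, and a path of length $j$ already has vertex integrity $\Omega(\sqrt{j})$, so no induced path---and hence no shortest path---in $G$ can have length exceeding $\bigo(p^2)$. Consequently, for each center $u$ there are at most $\bigo(p^2)$ distinct radii to consider at all, so $R_u \subseteq \{0,1,\dots,\bigo(p^2)\}$ can be recorded with $\bigo(p^2)$ bits outright, with no need to restrict attention to ``relevant'' radii. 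Your final multiplication then goes through verbatim.
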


\begin{proof}
All graphs in $\mathcal{H}$ have size at most $p$, which means that the number of non-isomorphic graphs in $\mathcal{H}$ can be trivially upper-bounded by $p\cdot 2^{p^2}$. Since $|X|<p$, there are also at most $p^2$ possible edges between $X$ and any $H\in \mathcal{H}$ in $G$. Lastly, the number of balls centered in $H\in \mathcal{H}$ is bounded above by the number of vertices in $H$ times the diameter of $G$, which is $p\cdot \bigo(p^2) = \bigo(p^3)$. Indeed, the diameter of a connected graph with vertex integrity $p$ is at most $\bigo(p^2)$, since the parameter does not increase by taking induced subgraphs and the vertex integrity of a path of length $j$ is $\bigo(\sqrt{j})$.
Combining these elements, we can upper-bound the total number of equivalence classes by $p\cdot 2^{p^2} \cdot 2^{p^2} \cdot 2^{\bigo(p^3)} = 2^{\bigo(p^3)}$.
\end{proof}

While the equivalence relation $\simB$ is defined based on the input (in particular, $G$ and $\mathcal{B}$), our proof requires also considering a more refined equivalence relation based on the structure of a hypothetical positive non-clashing teaching map. Toward this, we use the following notion to capture how a hypothetical teaching set interacts with the balls centered in the components of $\mathcal{H}$.

\begin{definition}\label{def:blueprints}
The \emph{blueprint} $S$ of a teaching set $T(B)$ for a ball $B=B_r(u)$ centered in $H\in \mathcal{H}$ is a tuple $(S_X, S_H, S_f)$ composed of:\begin{enumerate}
\item the set $S_X = T(B)\cap X$,
\item the set $S_H = T(B)\cap V(H)$,
\item the set $S_f=\{f_{H_0}~|~H_0\in \mathcal{H}\}$ of functions, where for each $H_0$ the function $f_{H_0}:V(H_0)\rightarrow \{0,1,2\}$ specifies whether for a vertex $v\in V(H_0)$, the set $([v]\eqB\cap T(B))\setminus V(H)$ of counterparts of $v$ \emph{outside of $H$} has size $0$, $1$ or at least $2$.
\end{enumerate}
\end{definition}

Intuitively, the blueprint specifies how the teaching set for $B$ interacts with (1) the set $X$ and (2) the vertices inside $H$ itself; for the rest of the graph, the blueprint also counts how many ``equivalent'' vertices it contains from each equivalence class of $\mathcal{H}$, \emph{but only up to $2$}.
At this point, it may not be clear why we do not differentiate between any size greater than $2$; the reason is that if the actual size is $3$ or more, there are superfluous elements in the teaching set, as we prove below. In fact, we prove a more general statement which holds regardless of whether vertices in $H$ are counted or not.

\begin{lemma}\label{lem:max2equiv}
Let $u\in V(G)$, $B=B_r(u)\in \mathcal{B}$, and $T$ be a positive non-clashing teaching map for~$\mathcal{B}$. Suppose there exist $H_0\in \mathcal{H}$ and $v\in V(H_0)$ such that $|[v]\eqB\cap T(B)|\geq 3$. Then, there exists $z$ in $[v]\eqB\cap T(B)$ such that removing $z$ from $T(B)$ yields a positive non-clashing teaching map for $\mathcal{B}$.
\end{lemma}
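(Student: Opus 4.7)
The plan is to show that, in fact, \emph{any} $z \in W := [v]\eqB \cap T(B)$ can be removed. Fix such a $z$ and suppose for contradiction that some $B' = B_s(c) \in \mathcal{B}$ witnesses a failure, i.e.\ $z$ was the unique vertex of $T(B) \cup T(B')$ lying in exactly one of $B, B'$. Since $z \in T(B) \subseteq B$, this forces $z \in B \setminus B'$, and so every other $v_j \in W$ must lie in $B \cap B'$, giving $d(v_j, c) \leq s$ for each $v_j \in W \setminus \{z\}$. The goal is to deduce $d(z, c) \leq s$ as well, contradicting $z \notin B'$.

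The main tool is that every canonical isomorphism $\alpha_{H, H'}$ between two $\simB$-equivalent blocks $H, H' \in \mathcal{H}$ extends to a graph automorphism $\bar{\alpha}$ of $G$ that swaps $V(H)$ with $V(H')$ via $\alpha$ and fixes every other vertex; this is well-defined precisely because both bullets of Definition~\ref{def:twinblocks} guarantee that adjacencies within the blocks and between the blocks and $X$ are preserved. Two consequences of this will be needed. First, whenever $q \in V(G)$ lies outside $V(H_i) \cup V(H_j)$ for two vertices $v_i, v_j \in [v]\eqB$ in distinct blocks, the automorphism swapping these blocks fixes $q$, so $d(v_i, q) = d(v_j, q)$. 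Second, a \emph{path-lifting} inequality: if $c \in V(H_z)$ and $v_j \in [v]\eqB$ sits in some block $H_j \neq H_z$, then $d(z, c) \leq d(v_j, c)$, obtained by taking a shortest $v_j$-to-$c$ path, which must exit $V(H_j)$ at some $x \in X$, transporting its prefix (which lies in $G[V(H_j) \cup \{x\}]$) via $\alpha_{H_j, H_z}$ into a path of the same length from $z$ to $x$ in $G[V(H_z) \cup \{x\}]$, and reusing the $x$-to-$c$ suffix verbatim.

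I then case-split on the location of $c$. If $c \in X$ or $c$ lies in some block outside $[H_0]\eqB$, then $c \notin V(H_z) \cup V(H_j)$ for any $v_j \in W \setminus \{z\}$, so the first consequence gives $d(z, c) = d(v_j, c) \leq s$, placing $z$ in $B'$. If $c \in V(H_j)$ for some $H_j \in [H_0]\eqB$ with $H_j \neq H_z$, then the hypothesis $|W| \geq 3$ is used to pick $v_k \in W$ whose block is neither $H_z$ nor $H_j$; both $z$ and $v_k$ lie outside $V(H_j)$, hence $d(z, c) = d(v_k, c) \leq s$, again yielding $z \in B'$. Finally, if $c \in V(H_z)$, then since $[v]\eqB$ contains exactly one vertex per block of $[H_0]\eqB$, every $v_j \in W \setminus \{z\}$ lies in some $H_j \neq H_z$, and the path-lifting inequality gives $d(z, c) \leq d(v_j, c) \leq s$, once more forcing $z \in B'$. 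Each case contradicts $z \in B \setminus B'$, so no such $B'$ exists.

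The main obstacle I anticipate is justifying the path-lifting step used in the third case, where the simple automorphism-swap argument does not directly apply since $c$ shares a block with $z$. One must carefully decompose a shortest $v_j$-to-$c$ path at the first moment it enters $X$ and verify that both bullets of Definition~\ref{def:twinblocks} are what allow the in-block prefix to be transplanted from $H_j$ to $H_z$ without any change in length; it is exactly the combination of edge-preservation inside the blocks and to $X$ that makes this transplantation go through.
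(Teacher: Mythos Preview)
Your proof is correct and in fact establishes more than the paper does: you show that \emph{any} $z\in[v]\eqB\cap T(B)$ may be removed, whereas the paper picks three witnesses $z_1,z_2,z_3$ in distinct blocks, selects $z_3$ with $u\notin V(H_3)$, and argues that a conflicting ball $B'=B_{r'}(u')$ would have $z_1,z_2\in B'$ but $z_3\notin B'$, hence $u'\in V(H_1)$ (from $d(u',z_1)<d(u',z_3)$) and symmetrically $u'\in V(H_2)$, a contradiction. The two arguments rest on the same core fact---distances from a vertex to two $\simB$-equivalent vertices coincide unless that vertex lies in one of their blocks---but you make this explicit via the swap automorphism and the path-lifting inequality, which is exactly what is needed to justify the paper's terse implication ``$d(u',z_1)<d(u',z_3)\Rightarrow u'\in V(H_1)$'' (ruling out $u'\in V(H_3)$ silently uses path-lifting). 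Your case-split on the location of $c$ is thus a different organisation of the same ingredients, buying the stronger conclusion and filling in detail the paper leaves to the reader, at the cost of a longer argument. One small remark: the automorphism $\bar\alpha$ needs only the first bullet of Definition~\ref{def:twinblocks} (together with $\alpha$ being a graph isomorphism), and to guarantee $\bar\alpha$ sends $v_i$ to $v_j$ you should use the composed map $\alpha_{H_0,H_j}\circ\alpha_{H_0,H_i}^{-1}$ rather than the independently fixed canonical $\alpha_{H_i,H_j}$, which need not carry $v_i$ to $v_j$.
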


\begin{proof}
Let $v$ and $H_0$ satisfy the premise, and let $z_1,z_2,z_3$ be three distinct elements of $\{w\in T(B) ~|~ w\in V(H'), H'\simB H_0, \alpha(w)=v\}$. We denote by $H_1$ ($H_2$, $H_3$, resp.) the component of~$\mathcal{H}$ containing $z_1$ ($z_2$, $z_3$, resp.). By definition, these components are disjoint since the vertices are $\simB$-equivalent, and thus, $u$ can be in at most one of $H_1,H_2,H_3$ (and possibly none, \emph{i.e.}, $u$ could be in some other component or in $X$). Without loss of generality, we assume that $u$ is not in $V(H_3)$, and claim that removing $z_3$ from $T(B)$ results in a positive non-clashing teaching map $T'$ for $\mathcal{B}$.

We prove this claim as follows. Toward a contradiction, suppose there is a ball $B'=B_{r'}(u')$ such that $T'$ does not satisfy the non-clashing condition for $B$ and $B'$. Then, $z_3$ was the only vertex in $T(B)\cup T(B')$ that was not contained in $B\cap B'$, and hence, $z_3\notin B'$ and $z_1, z_2\in B'$. Therefore, $d(u', z_1) < d(u', z_3)$, which implies that $u'\in V(H_1)$ since $z_1\simB z_3$. However, we also have that $d(u', z_2) < d(u', z_3)$, and so, $u'\in V(H_2)$, which is a contradiction since $V(H_1)\cap V(H_2)=\emptyset$. Thus, such a ball $B'$ does not exist, and we can safely remove $z_3$ from $T$, proving the lemma.
\end{proof}

As a corollary of Lemma~\ref{lem:max2equiv}, we can obtain an upper bound on the positive non-clashing teaching dimension of our instance, which will be useful later in this section.

\begin{corollary}\label{cor:boundedsol}
Let $G$ be a graph with vertex integrity $p$ and $\mathcal{B}$ an arbitrary set of balls of $G$. 
Then, the positive non-clashing teaching dimension of $\mathcal{B}$ is upper-bounded by a function $s(p)= 2^{\bigo(p^3)}$.
\end{corollary}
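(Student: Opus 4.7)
The plan is to start from a trivial non-clashing teaching map and iteratively shrink each teaching set using Lemma~\ref{lem:max2equiv}, then bound the resulting size by counting equivalence classes via Observation~\ref{obs:numbclass}.

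First I would observe that the ``full'' assignment $T_0(B)=B$ for every $B\in\mathcal{B}$ is trivially a positive non-clashing teaching map: for any two distinct balls $B_1\neq B_2$, there is a vertex in $B_1\triangle B_2\subseteq T_0(B_1)\cup T_0(B_2)$ that is not in $B_1\cap B_2$. This gives us a starting map to reduce.

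Next, I would repeatedly apply Lemma~\ref{lem:max2equiv}: as long as there exist $B\in\mathcal{B}$, $H_0\in\mathcal{H}$, and $v\in V(H_0)$ such that $|[v]\eqB\cap T(B)|\geq 3$, delete a vertex from $T(B)$ as guaranteed by the lemma. This process terminates (each step strictly decreases $\sum_{B\in\mathcal{B}}|T(B)|$) and yields a positive non-clashing teaching map $T^\star$ such that, for every $B\in\mathcal{B}$ and every $\simB$-equivalence class $[v]\eqB$ of vertices in $\bigcup_{H\in\mathcal{H}}V(H)$, we have $|[v]\eqB\cap T^\star(B)|\leq 2$.

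For the counting step, fix any $B\in\mathcal{B}$ and decompose $T^\star(B) = (T^\star(B)\cap X) \,\cup\, (T^\star(B)\setminus X)$. The first part has size at most $|X|\leq p$. For the second part, each vertex lies in some component of $\mathcal{H}$, so belongs to exactly one $\simB$-equivalence class on vertices. By Observation~\ref{obs:numbclass}, the number of equivalence classes of components is at most $2^{\bigo(p^3)}$, and each component has at most $p$ vertices; thus the total number of vertex-equivalence classes across $\bigcup_{H\in\mathcal{H}}V(H)$ is at most $p\cdot 2^{\bigo(p^3)} = 2^{\bigo(p^3)}$. Since $T^\star(B)$ contains at most $2$ vertices per such class, we get
\[
|T^\star(B)| \;\leq\; p + 2\cdot 2^{\bigo(p^3)} \;=\; 2^{\bigo(p^3)},
\]
uniformly over $B\in\mathcal{B}$. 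Hence the positive non-clashing teaching dimension of $\mathcal{B}$ is at most $s(p)=2^{\bigo(p^3)}$, as claimed.

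There is no real obstacle here beyond correctly combining Lemma~\ref{lem:max2equiv} with Observation~\ref{obs:numbclass}; the only point requiring slight care is to ensure that the reduction process does indeed terminate with a valid non-clashing map (which follows because each single application of Lemma~\ref{lem:max2equiv} preserves the non-clashing property) and that the bound on vertex-equivalence classes correctly accounts for all vertices outside of $X$.
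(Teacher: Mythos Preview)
Your proof is correct and follows essentially the same approach as the paper: repeatedly apply Lemma~\ref{lem:max2equiv} until no ball's teaching set contains three $\simB$-equivalent vertices, then count using Observation~\ref{obs:numbclass} to obtain $|T(B)|\leq |X|+2p\cdot 2^{\bigo(p^3)}=2^{\bigo(p^3)}$. The only (minor) difference is that you explicitly exhibit a starting non-clashing map $T_0(B)=B$ before reducing, whereas the paper simply asserts one may take a map to which the lemma no longer applies; your version is slightly more self-contained but otherwise identical in substance.
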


\begin{proof}
For any ball $B\in \mathcal{B}$ and positive non-clashing teaching map $T$ for $\mathcal{B}$, if Lemma~\ref{lem:max2equiv} can be applied, then there is an unnecessary vertex in $T(B)$ that can be removed. Thus, without loss of generality, consider a positive non-clashing teaching map $T$ for $\mathcal{B}$ for which Lemma~\ref{lem:max2equiv} cannot be applied for any $B\in \mathcal{B}$. For each $B\in \mathcal{B}$ and each of the at most $2^{\bigo(p^3)}$ equivalence classes (by Observation~\ref{obs:numbclass}) of~$\mathcal{H}$ for $\simB$, $T(B)$ can contain at most $2p$ vertices from that equivalence class, as otherwise it would contain $3$ vertices that are equivalent according to $\simB$. Thus, the total size of $T(B)$ is upper-bounded by $|X| + 2^{\bigo(p^3)}\cdot 2p = 2^{\bigo(p^3)}$.
\end{proof}

Returning to the notion of blueprints defined earlier, we can now formalize a refined notion of equivalence that also takes into account the behavior of a hypothetical solution.

\begin{definition}
Given $B=B_r(u)$ centered in $H$, and $H'\simB H$, we say that $B$ and $B'=B_r(\alpha(u))$ \emph{share the same blueprint} if, for $(S_X, S_H, (f_{H_0})_{H_0\in \mathcal{H}})$ and $(S'_X, S'_H, (f'_{H_0})_{H_0\in \mathcal{H}})$, the respective blueprints of $B$ and $B'$, the following hold:
\begin{itemize}
\item $S_X = S'_X$,
\item $\forall v\in V(H), v\in S_H \Leftrightarrow \alpha(v) \in S'_H$, and
\item $\forall H_0\in \mathcal{H}, f_{H_0}=f'_{H_0}$.
\end{itemize}
\end{definition}

\begin{definition}\label{def:perfect_eq}
Given a positive non-clashing teaching map $T$ for $\mathcal{B}$, we say that two components $H,H'\in \mathcal{H}$ are \emph{perfectly-equivalent twin-blocks} (or simply \emph{perfectly equivalent}) with respect to $T$ on $\mathcal{B}$, denoted $H\simBS H'$, if $H\simB H'$ and, for all $u\in V(H)$, $r\in \mathbb{N}$, and $B_r(u)\in \mathcal{B}$, it holds that $B_r(u)$ and $B_r(\alpha(u))$ share the same blueprint. Note that $\simBS$ is an equivalence relation.
\end{definition}

Crucially, we can also bound the number of equivalence classes for this refined equivalence defined w.r.t.\ a hypothetical teaching map. We denote the equivalence class of $H$ by $[H]\equivBS$.

\begin{observation}\label{obs:numbclassperfect}
The number of equivalence classes on $\mathcal{H}$ defined by $\simBS$ is at most $2^{2^{\bigo(p^3)}}$.
\end{observation}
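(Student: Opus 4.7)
The plan is to associate to each $H \in \mathcal{H}$ a \emph{signature} that fully determines its $\simBS$-class, and then count the possible signatures. Following Definitions~\ref{def:blueprints} and~\ref{def:perfect_eq}, I would define $H$'s signature as (i) its $\simB$-class $[H]\eqB$, together with (ii) the tuple of blueprints of the balls in $\mathcal{B}$ centered in $V(H)$, listed in a canonical order via a chosen isomorphism to a fixed representative of $[H]\eqB$. By the second bullet of Definition~\ref{def:perfect_eq}, two components lie in the same $\simBS$-class if and only if their signatures coincide, so it suffices to upper bound the number of distinct signatures.

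By Observation~\ref{obs:numbclass}, (i) ranges over at most $2^{\bigo(p^3)}$ values, and the number of balls centered in $V(H)$ is at most $|V(H)|\cdot (d+1) = \bigo(p^3)$, using that $d = \bigo(p^2)$ (as argued inside the proof of that observation). For each blueprint $(S_X, S_H, S_f)$, the first two components are subsets of $X$ and $V(H)$, respectively, contributing at most $2^{|X|}\cdot 2^{|V(H)|}\leq 4^p$ options.

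The main obstacle is bounding the number of possible $S_f$, since the index set $\mathcal{H}$ can be arbitrarily large in $p$. The key observation I would exploit is that if $H_0, H_0' \in \mathcal{H}\setminus\{H\}$ are $\simB$-equivalent, then $f_{H_0}$ and $f_{H_0'}$ coincide via the canonical isomorphism $\alpha_{H_0, H_0'}$: for any $v\in V(H_0)$, one has $[v]\eqB = [\alpha_{H_0,H_0'}(v)]\eqB$ and the set $V(H)$ being subtracted is the same in both expressions, so the clipped cardinalities agree. Consequently, $S_f$ is determined, up to canonical isomorphism, by specifying at most two functions of the form $V \to \{0,1,2\}$ per $\simB$-class: one for the ``generic'' components of the class, and, when the class is $[H]\eqB$, a separate one for $H$ itself, whose $f_H$ can differ from the rest due to the particular interaction with the $\setminus V(H)$ clipping. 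Each such function has at most $3^p$ possible values, and with $2^{\bigo(p^3)}$ classes, this yields at most $2^{2^{\bigo(p^3)}}$ choices for $S_f$.

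Combining these estimates, each blueprint has at most $4^p \cdot 2^{2^{\bigo(p^3)}} = 2^{2^{\bigo(p^3)}}$ possible values, and a signature is an $\simB$-class together with an $\bigo(p^3)$-tuple of blueprints, for a total of at most $2^{\bigo(p^3)}\cdot \bigl(2^{2^{\bigo(p^3)}}\bigr)^{\bigo(p^3)} = 2^{2^{\bigo(p^3)}}$ distinct signatures, which matches the target bound on the number of $\simBS$-classes.
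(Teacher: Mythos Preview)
Your proposal is correct and follows essentially the same approach as the paper: both arguments count the possible $\simB$-class, then the possible blueprints per ball (bounding $S_X$, $S_H$, and $S_f$ separately), then raise to the $\bigo(p^3)$ power for the number of balls centered in $H$. The crucial observation that $f_{H_0}$ depends only on the $\simB$-class of $H_0$ is the same in both. Your extra caution about treating $f_H$ separately from the other members of $[H]\eqB$ is in fact unnecessary---since the clipping $\setminus V(H)$ in Definition~\ref{def:blueprints} refers to the fixed $H$ where $B$ is centered (not to $H_0$), one has $f_{H_0}(v)=f_{H_0'}(\alpha_{H_0,H_0'}(v))$ for \emph{all} $H_0\simB H_0'$, including $H_0=H$---but this harmless overcounting is absorbed in the asymptotics.
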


\begin{proof}
By Observation~\ref{obs:numbclass}, there are $2^{\bigo(p^3)}$ possibilities for the equivalence class on $\mathcal{H}$ defined by~$\simB$.
For any ball, there are then at most $2^p$ possibilities for the set $S_X$, as well as at most $2^p$ for the set $S_H$, and for a given $H_0$, at most $3^p$ for $f_{H_0}$. However, these choices for $(f_{H_0})_{H_0 \in \mathcal{H}}$ are not independent: $H'\simB H'' \Rightarrow f_{H'}=f_{H''}$. Thus, there are in total $2^p\cdot 2^p\cdot (3^p)^{2^{\bigo(p^3)}}=2^{2^{\bigo(p^3)}}$ possible blueprints for a given ball.
There are at most $\bigo(p^3)$ balls centered in a given $H$ (for details, see proof of Observation~\ref{obs:numbclass}).
Putting everything together gives us an upper bound for the number of equivalence classes for $\simBS$ of $2^{\bigo(p^3)} \cdot (2^{2^{\bigo(p^3)}})^{\bigo(p^3)}=2^{2^{\bigo(p^3)}}$.
\end{proof}

With this refined equivalence in hand, we proceed to the second milestone required for our algorithm: a structural result establishing the existence of ``well-behaved teaching maps''. Roughly speaking, by ``well-behaved'' we mean teaching maps which only use examples (\emph{i.e.}, vertices) from $X$ and a bounded number of ``core'' components in $\mathcal{H}$, with some controlled exceptions. Toward formalizing this, let us fix an arbitrary total ordering $\prec$ on the components in $\mathcal{H}$. 

\begin{definition}\label{def:core}
For $B\in \mathcal{B}$ and $x\in T(B)\cap V(H)$ for some $H\in \mathcal{H}$, the $(B,x)$\emph{-core} $\coreBx$ contains the first $s(p)+1$ (see Corollary~\ref{cor:boundedsol}) elements w.r.t.\ $\prec$ 
of $\{H'\in [H]\eqB~|~\exists B'\in [B]\equivBS, \alpha(x)\in T(B')\cap V(H')\}$.
The $B$\emph{-core} is the union of the $(B,x)$-cores for all such $x$, and the \emph{core} $\core$ is the union for all $B\in \mathcal{B}$ of the $B$-cores.
\end{definition}

\begin{definition}\label{def:ext_core}
A component $H\in \mathcal{H}$ is in the \emph{1-extended core} $\extcore$ if $H\in \core$ or there exists $B\in \mathcal{B}$ and $x\in V(H)$ such that (1) $B$ is centered in $\core$ or in $X$, and (2) $x\in T(B)$. Going one step further, a component $H\in \mathcal{H}$ is in the \emph{2-extended core} $\extcorebis$ if $H\in \extcore$ or there exists $B\in \mathcal{B}$ and $x\in V(H)$ such that (1) $B$ is centered in $\extcore$ or in $X$, and (2) $x\in T(B)$. 
\end{definition}

For $A\in \{\core, \extcore, \extcorebis\}$, note that $A$ is a subset of $\mathcal{H}$, but by slight abuse of notation, we will write $x\in V(A)$ for $x\in V(G)$ to denote $\exists H\in A$ with $x\in V(H)$ when not leading to any ambiguity. It is not difficult to see that the definitions yield the following.

\begin{observation}\label{obs:ext_core}
The number of components in $\extcorebis$ is upper-bounded by $c(p)= 2^{2^{\bigo(p^3)}}$.
\end{observation}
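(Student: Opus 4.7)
The plan is to bound the three sets $\core$, $\extcore$, and $\extcorebis$ sequentially, each time reusing the fact that $s(p)=2^{\bigo(p^3)}$ (Corollary~\ref{cor:boundedsol}), that the number of $\simBS$-classes on $\mathcal{H}$ is at most $2^{2^{\bigo(p^3)}}$ (Observation~\ref{obs:numbclassperfect}), and that the diameter of $G$ is $\bigo(p^2)$ (used in the proof of Observation~\ref{obs:numbclass}), so each component supports at most $p\cdot \bigo(p^2)=\bigo(p^3)$ balls.

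First I bound $|\core|$. Directly from Definition~\ref{def:core}, each $(B,x)$-core has size at most $s(p)+1=2^{\bigo(p^3)}$. The essential step, and the main obstacle of the argument, is to verify that the $(B,x)$-core is an invariant of the pair $\bigl([B]\equivBS,\,[x]\eqB\bigr)$: if $B\simBS B'$ via the canonical isomorphism $\alpha_{H,H'}$ mapping $x\in V(H)$ to $x'\in V(H')$, then $[B]\equivBS=[B']\equivBS$, $[H]\eqB=[H']\eqB$, and since canonical isomorphisms compose (i.e.\ $\alpha_{H,H_0}=\alpha_{H',H_0}\circ\alpha_{H,H'}$ after a consistent choice), the defining set $\{H_0\in[H]\eqB\mid \exists B''\in[B]\equivBS,\ \alpha(x)\in T(B'')\cap V(H_0)\}$ coincides with its counterpart for $(B',x')$. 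Combining (a) at most $2^{2^{\bigo(p^3)}}$ $\simBS$-classes of components, (b) at most $\bigo(p^3)$ balls per representative component plus $\bigo(p^3)$ balls centered in $X$, and (c) at most $p$ choices of $[x]\eqB$, yields at most $2^{2^{\bigo(p^3)}}$ distinct $(B,x)$-cores, and hence $|\core|\le 2^{2^{\bigo(p^3)}}\cdot 2^{\bigo(p^3)}=2^{2^{\bigo(p^3)}}$.

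Next I turn to $\extcore$. A component added beyond $\core$ must contain some $x\in T(B)$ for a ball $B$ centered in $\core\cup X$. The number of such balls is at most $(|\core|+|X|)\cdot\bigo(p^3)=2^{2^{\bigo(p^3)}}$, each with $|T(B)|\le s(p)=2^{\bigo(p^3)}$, and each teaching-set vertex lies in at most one component of $\mathcal{H}$; so the number of components added is at most $2^{2^{\bigo(p^3)}}\cdot 2^{\bigo(p^3)}=2^{2^{\bigo(p^3)}}$. This gives $|\extcore|\le 2^{2^{\bigo(p^3)}}$, and replaying the identical argument with $\extcore$ in place of $\core$ yields $|\extcorebis|\le 2^{2^{\bigo(p^3)}}$, so we may take $c(p)=2^{2^{\bigo(p^3)}}$. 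The arithmetic works precisely because $2^{2^{\bigo(p^3)}}\cdot 2^{\bigo(p^3)}=2^{2^{\bigo(p^3)}}$, so both extension steps absorb cleanly into the double exponential; without the $\simBS$-invariance observation of paragraph two, however, a naive union over all $B\in\mathcal{B}$ would scale with $|\mathcal{B}|$, which is not bounded by any function of $p$, and the strategy would collapse.
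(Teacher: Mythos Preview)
Your argument is correct and follows essentially the same route as the paper: bound $|\core|$ by observing that each $(B,x)$-core has at most $s(p)+1$ components and that the core depends only on the $\simBS$-class of $B$ (the paper phrases this at the level of $B$-cores rather than $(B,x)$-cores), then extend twice by counting balls centered in the previous set times the teaching-set size $s(p)$. One minor slip: in (c) the number of classes $[x]\eqB$ is at most $2^{\bigo(p^3)}\cdot p$ (one factor for the $\simB$-class of the component containing $x$, one for the position inside it), not $p$; this is harmless since it is absorbed into $2^{2^{\bigo(p^3)}}$.
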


\begin{proof}
Any $(B,x)$-core $\coreBx$ contains at most $s(p)+1$ components by definition. Thus, any $B$-core contains at most $\bigo(s(p)^2)$ components (see Corollary~\ref{cor:boundedsol}). Since two $\simB$-equivalent balls from perfectly-equivalent twin-blocks have the same $B$-core, we get that $\core$ has at most $\bigo(p^3)\cdot 2^{2^{\bigo(p^3)}}\cdot \bigo(s(p)^2)=2^{2^{\bigo(p^3)}}$ components (cf. Observation~\ref{obs:numbclassperfect}).
Adding $p$ to this value before multiplying it by $\bigo(p^3)$ gives a bound on the number of balls centered in $\core$ or $X$, and with an additional factor $s(p)$ we get the maximum size of the union of their teaching sets, which is an upper bound for the number of components in $\extcore$. We can reiterate this to obtain an upper bound for $|\extcorebis|$ which is:
$s(p)\cdot \bigoh(p^3)\cdot \left(p+ s(p)\cdot \bigoh(p^3) \cdot (p+ 2^{2^{\bigo(p^3)}})\right)=2^{2^{\bigo(p^3)}}$.
\end{proof}

We can now formalize the notion of ``well-behaved'' teaching maps via the property of compactness.

\begin{definition}\label{def:compact}
A teaching map $T$ is \emph{compact} if, for any $H\in \mathcal{H}\setminus \extcorebis$ and any $x\in V(H)$, each ball whose teaching set contains $x$ is centered in $H$.
\end{definition}

Our next aim is to prove that we can safely restrict our attention to computing compact teaching maps only; this is stated in Lemma~\ref{lem:exists_compact} later on. Before we can establish that result, we first prove two auxiliary statements.

\begin{lemma}\label{lem:tradeoff_swap}
Let $T$ be a positive non-clashing teaching map for $\mathcal{B}$. Let $B\in \mathcal{B}$ be centered in $H_B\in \mathcal{H}\setminus \extcore$ with $x\in T(B)\cap V(H)$ for some $H\in \mathcal{H}\setminus H_B$ and $H_i\in \coreBx$. By definition, there exists $x_i\in[x]\eqB \cap V(H_i)$. If replacing $x$ by $x_i$ in $T(B)$ creates a conflict, then there exists $z_i\in T(B)\cap V(H_i)$.
\end{lemma}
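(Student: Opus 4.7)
The plan is to prove the contrapositive: assuming $T(B)\cap V(H_i)=\emptyset$, I will show that the map $T'$ obtained from $T$ by swapping $x$ for $x_i$ only in $T(B)$ (with $T'=T$ elsewhere) remains positive and non-clashing for $\mathcal{B}$. The main tool throughout will be a mirroring argument based on the canonical isomorphism $\alpha\colon V(H)\to V(H_i)$ arising from $H\simB H_i$ (WLOG $H\neq H_i$, as otherwise the conclusion is immediate, since either $x_i=x$ or $x\in T(B)\cap V(H_i)$): because $\alpha$ preserves internal edges of $H/H_i$ and edges to $X$, any path in $G$ can be transformed into another path of the same length by swapping its $V(H)$- and $V(H_i)$-vertices through $\alpha$ and $\alpha^{-1}$.

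First I would show that $T'$ is a valid positive teaching map by verifying $x_i\in B$. Since $H_B\in\mathcal{H}\setminus\extcore$ and $H_i\in\core\subseteq\extcore$, we have $H_B\neq H_i$, so $u$ lies outside $V(H)\cup V(H_i)$; mirroring any shortest $u$-to-$x$ path then produces a $u$-to-$x_i$ path of the same length, giving $d(u,x_i)=d(u,x)\le r$.

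Next I would pick an arbitrary $B'=B_{r'}(u')\in\mathcal{B}$ and show that $T'$ distinguishes $B$ from $B'$. Any distinguisher $w\in T(B)\cup T(B')$ supplied by $T$ with $w\neq x$ remains present in $T'(B)\cup T'(B')$, so the delicate case is when $x$ is the \emph{unique} distinguisher in $T$: then $x\notin B'$ and $T(B)\setminus\{x\},T(B')\subseteq B\cap B'$, and additionally $x_i\in B'$ (else $x_i$ is a fresh distinguisher). Applying the mirroring argument symmetrically forces $d(u',x)=d(u',x_i)$ whenever $u'\notin V(H)\cup V(H_i)$, which contradicts $d(u',x)>r'\ge d(u',x_i)$; hence $u'\in V(H)\cup V(H_i)$. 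By symmetry it suffices to treat $u'\in V(H_i)$: set $y=\alpha^{-1}(u')\in V(H)$ and define the mirror ball $B^\sim=B_{r'}(y)$, which lies in $\mathcal{B}$ since $H\simB H_i$. Mirroring then yields $x\in B^\sim$, $x_i\notin B^\sim$, and the identity of $B^\sim$ and $B'$ outside $V(H)\cup V(H_i)$.

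The main obstacle is ruling out this last scenario. My plan is to exploit the witness $B''\in[B]\equivBS$ guaranteed by $H_i\in\coreBx$, which satisfies $x_i\in T(B'')\cap V(H_i)$: the blueprint equality of $B$ and $B''$ gives $T(B)\cap X=T(B'')\cap X$, identifies $T(B)\cap V(H_B)$ with $T(B'')\cap V(H_{B''})$ through the canonical isomorphism $\beta\colon V(H_B)\to V(H_{B''})$, and matches the functions $f_{H_0}$ on every $\simB$-class. By combining the non-clashing property of $T$ on the pairs $(B,B^\sim)$, $(B'',B')$, and $(B,B'')$ with the hypothesis $T(B)\cap V(H_i)=\emptyset$ and the mirror symmetry of $B'$ and $B^\sim$, the plan is to exhibit an element of $T(B)\cup T(B')$ distinct from $x$ that already distinguishes $B$ from $B'$ in $T$, contradicting the uniqueness of $x$ as a distinguisher. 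Carefully tracking on which side of the $V(H)\leftrightarrow V(H_i)$ mirror each candidate distinguisher lies (together with handling the symmetric case $u'\in V(H)$ via $u''=\alpha(u')$ and $B^\sim=B_{r'}(u'')$) is where I expect the bookkeeping to become most delicate.
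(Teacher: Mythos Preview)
Your setup coincides with the paper's: both identify that a conflicting ball $B'$ must be centred in $H_i$ and pass to the mirror ball in $H$ (the paper writes $B'_0$ for your $B^\sim$). From that point the approaches diverge sharply.

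The paper's argument is much shorter and never invokes the witness $B''\in[B]\equivBS$ or any blueprint comparison. Having introduced $B'_0$, it simply looks at a distinguisher $z$ for the pair $(B,B'_0)$ in the original map $T$ and splits into two cases. If $z\in T(B)$, then $z\neq x$ (the mirror puts $x\in B'_0$), so $z\in T(B)\setminus\{x\}\subseteq B'$ by the conflict assumption while $z\notin B'_0$; the mirror symmetry between $B'$ and $B'_0$ then forces $z\in V(H_i)$, and this $z$ is exactly the claimed $z_i$. If instead $z\in T(B'_0)\setminus B$, the paper derives a contradiction using the hypothesis $H_B\notin\extcore$: it locates a vertex of $T(B')$ inside $V(H_B)$, and since $B'$ is centred in $H_i\in\core$, the definition of $\extcore$ would then put $H_B\in\extcore$. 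So the entire proof is a two-case split in which the desired $z_i$ is produced \emph{directly} in the first case rather than via a contrapositive contradiction.

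Your plan to route the argument through $B''$, the blueprint equality, and the three auxiliary non-clashing pairs $(B,B^\sim)$, $(B'',B')$, $(B,B'')$ is considerably heavier, and the decisive step is left as a promise (``the plan is to exhibit an element\dots''). It is not clear how those pieces combine: the blueprint only records, for each $\simB$-class, how many representatives lie in the teaching set outside the home component, capped at $2$; it does not tell you \emph{which} components those representatives sit in, so matching $f_{H_0}$-values does not by itself place any element of $T(B)$ in $V(H_i)$ or otherwise produce a fresh distinguisher for $(B,B')$. A second warning sign is that you use $H_B\notin\extcore$ only to obtain $H_B\neq H_i$, whereas the paper exploits it in a strictly stronger way (balls centred in $\core$ cannot have teaching vertices in $H_B$). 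Since the lemma specifically assumes $H_B\notin\extcore$ rather than merely $H_B\notin\core$, an argument that never uses the definition of $\extcore$ is probably missing the intended mechanism. I would drop the $B''$/blueprint detour and follow the paper's direct two-case analysis of the distinguisher for $(B,B'_0)$.
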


\begin{proof}
Let $B'$ be a ball creating a conflict with $B$ when we replace $x$ by $x_i$ in $T(B)$. For a conflict to happen, it means that $x_i\in B'$ whereas $x\notin B'$. Since $x_i\in [x]\eqB$, then $B'$ has to be centered in $H_i$. Let us consider $B'_0 \in [B']\eqB$ centered in $H$. Prior to the replacement, $B'_0$ has no conflict with~$B$, and thus, there are two possibilities:

\emph(1) Let us first consider the case where there exists $z\in T(B'_0)$ that distinguishes $B'_0$ from $B$. By definition, $z\notin B$. Since $B'\simB B'_0$ and $H'\simBS H$, there exists $z'\in T(B')$, such that $z'\in [z]\eqB$. As $z'$ does not distinguish $B'$ from $B$, it holds that $z'\in B$. However, since $z'\simB z$, this means that $z'\in H_B$, which leads to a contradiction since $H_B \notin \extcore$.

\emph(2) Otherwise, there exists $z\in T(B)$ that distinguishes $B$ from $B'_0$, which means that $z\notin B'_0$. However, $z\in B'$ since there is a conflict with $B'$. As $B'\simB B'_0$ and $H'\simBS H$, then the facts that $z\notin B'_0$ and $z\in B'$ imply that $z\in H_i$. Hence, we found the $z_i$ whose existence we claimed.
\end{proof}

\begin{lemma}\label{lem:exists_swap}
Let $T$ be an optimal positive non-clashing teaching map for $\mathcal{B}$, with $B\in \mathcal{B}$ centered~in $H_B\in \mathcal{H}\setminus \extcore$. For $x\in T(B)\cap V(H)$ for some $H\in \mathcal{H}\setminus H_B$, if $|\coreBx|=s(p)+1$, then there are $H_i\in \coreBx$ and $x_i\in[x]\eqB \cap V(H_i)$ such that replacing $x$ by $x_i$ in $T(B)$ creates no conflict.
\end{lemma}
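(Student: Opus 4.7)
The plan is to prove this by contradiction, combining Lemma~\ref{lem:tradeoff_swap} with the global dimension bound of Corollary~\ref{cor:boundedsol} in a pigeonhole-style counting argument. Recall from the definition of $[x]\eqB$ that for each $H' \in [H]\eqB$, the set $[x]\eqB \cap V(H')$ is the singleton $\{\alpha_{H,H'}(x)\}$, so for each $H_i \in \coreBx$ the candidate $x_i$ is uniquely determined. The contradiction hypothesis to work with is therefore: for every $H_i \in \coreBx$, replacing $x$ by $\alpha_{H,H_i}(x)$ in $T(B)$ creates a conflict.

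First I would rule out the degenerate case $H \in \coreBx$: taking $H_i = H$ yields $\alpha_{H,H}(x) = x$, so the "replacement" is the identity and creates no conflict, directly contradicting the hypothesis. Hence $H \notin \coreBx$, and in particular every $H_i \in \coreBx$ is distinct from both $H_B$ (which lies outside $\extcore \supseteq \core \supseteq \coreBx$) and $H$, so the premise of Lemma~\ref{lem:tradeoff_swap} is satisfied for each $H_i$.

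Next I would apply Lemma~\ref{lem:tradeoff_swap} to each $H_i \in \coreBx$: the assumed conflict upon replacement guarantees the existence of some $z_i \in T(B) \cap V(H_i)$. Since the components in $\coreBx$ are pairwise vertex-disjoint, the vertices $\{z_i\}_{H_i \in \coreBx}$ are pairwise distinct, yielding
\[
|T(B)| \;\geq\; |\coreBx| \;=\; s(p)+1.
\]
On the other hand, Corollary~\ref{cor:boundedsol} provides a positive non-clashing teaching map for $\mathcal{B}$ of dimension at most $s(p)$, so by optimality of $T$ we must have $|T(B)| \leq s(p)$, a contradiction.

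The argument is essentially self-assembling once the definitions are in place: the cardinality $s(p)+1$ in Definition~\ref{def:core} is chosen precisely to exceed the optimal dimension cap of Corollary~\ref{cor:boundedsol} by one, making the pigeonhole step immediate. The only subtlety to watch is the bookkeeping of which components of $\mathcal{H}$ are excluded from $\coreBx$ (the trivial self-replacement when $H \in \coreBx$, and checking that $H_B \notin \core$) so that Lemma~\ref{lem:tradeoff_swap} genuinely applies to every element of $\coreBx$; no deeper obstacle is expected.
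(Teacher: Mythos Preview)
Your proposal is correct and follows essentially the same approach as the paper: argue by contradiction, apply Lemma~\ref{lem:tradeoff_swap} to every $H_i\in\coreBx$ to obtain pairwise distinct $z_i\in T(B)$, and conclude $|T(B)|\geq s(p)+1$, contradicting the bound from Corollary~\ref{cor:boundedsol}. Your version is slightly more explicit in two places the paper glosses over---handling the degenerate case $H\in\coreBx$ and invoking optimality of $T$ to convert Corollary~\ref{cor:boundedsol}'s dimension bound into a bound on $|T(B)|$---but the core argument is identical.
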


\begin{proof}
We prove the lemma by contradiction. Suppose that for all $x_i\in [x]\eqB$ of $\coreBx$, a conflict is created. By Lemma~\ref{lem:tradeoff_swap}, for all $H_i$ in $\coreBx$, there exists $z_i\in T(B)\cap V(H_i)$. 
However, these $H_i$'s are all disjoint, and hence, $|T(B)|\geq |\coreBx|=s(p)+1$, which contradicts the maximum size of a teaching set established in Corollary~\ref{cor:boundedsol}.
This proves that there exist such an $H_i$ and $x_i$.
\end{proof}

\begin{lemma}\label{lem:exists_compact}
If $\mathcal{B}$ has positive non-clashing teaching dimension $k$, then it also admits a compact positive non-clashing teaching map of dimension $k$. 
\end{lemma}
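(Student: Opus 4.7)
The plan is to start with an optimal positive non-clashing teaching map $T$ of dimension $k$ (which exists by hypothesis) and iteratively apply the swap operation of Lemma~\ref{lem:exists_swap} until compactness is achieved, maintaining positivity, non-clashing, and dimension $k$ throughout.

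First, I would show that any violation of compactness in $T$ licenses an application of Lemma~\ref{lem:exists_swap}. A violation yields a triple $(B, x, H)$ with $x \in T(B) \cap V(H)$, $H \notin \extcorebis$, and the center $u$ of $B$ not in $V(H)$. The vertex $u$ cannot lie in $X$: otherwise Definition~\ref{def:ext_core} would place $H$ in $\extcore \subseteq \extcorebis$. So $u \in V(H_B)$ for some $H_B \in \mathcal{H}$ with $H_B \neq H$, and likewise $H_B \notin \extcore$, since otherwise the pair $(B,x)$ would pull $H$ into $\extcorebis$. Finally, $H$ itself witnesses membership in the set $\{H' \in [H]\eqB : \exists B' \in [B]\equivBS,\ \alpha(x) \in T(B') \cap V(H')\}$ via $B' = B$ and the identity isomorphism, yet $H \notin \core$ (as $\core \subseteq \extcorebis$); hence $H$ lies in that set but outside its first $s(p)+1$ elements under $\prec$, forcing the set to have strictly more than $s(p)+1$ elements and $|\coreBx| = s(p)+1$. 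Lemma~\ref{lem:exists_swap} then produces $H_i \in \coreBx$ and $x_i \in [x]\eqB \cap V(H_i)$ such that replacing $x$ by $x_i$ in $T(B)$ yields a positive non-clashing teaching map $T'$ of dimension $k$; positivity is preserved because the $\simB$-isomorphism between $H$ and $H_i$ preserves both neighborhoods into $X$ and internal structure, so the distance from $u \in V(H_B)$ (disjoint from both components) to $x$ equals that to $x_i$.

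To guarantee termination, I would assign each teaching map the potential $\Phi(T) = \sum_{B \in \mathcal{B}} \sum_{x \in T(B) \cap V(H'),\ H' \in \mathcal{H},\ H' \neq H_B} r(H')$, where $r(H')$ denotes the rank of $H'$ under the fixed order $\prec$. A single swap alters exactly one summand---$(B,x)$ is replaced by $(B,x_i)$---and because $H_i$ lies among the first $s(p)+1$ elements picked into $\coreBx$ while $H$ does not, $r(H_i) < r(H)$, so $\Phi$ strictly decreases. Since $\Phi$ is a non-negative integer, the iteration terminates after finitely many swaps at a map $T^*$ of dimension $k$ admitting no further swap; by the preceding paragraph, the absence of swap targets is exactly the statement that $T^*$ is compact, so $T^*$ is the desired map.

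The main obstacle---and the reason for the elaborate framework of cores and perfect equivalence---is that $\simBS$, $\core$, $\extcore$, and $\extcorebis$ are all defined relative to the current $T$, so the compactness target shifts as $T$ is modified. The potential $\Phi$ is chosen to be purely intrinsic to $T$ (depending only on which components host teaching examples, not on any derived equivalence), which lets the monotonicity argument proceed without tracking how $\extcorebis$ evolves. A supporting observation worth verifying in the write-up is that each swap leaves the blueprint of $B$ unchanged---since $x$ and $x_i$ are $\simB$-equivalent and both lie outside $V(H_B) \cup X$, they contribute identically to each $f_{H_0}$ as well as to $S_X$ and $S_H$---so $\simBS$, and hence the applicability of Lemma~\ref{lem:exists_swap}, remains valid throughout the iteration.
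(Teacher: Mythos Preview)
Your argument is correct and follows the same strategy as the paper: repeatedly invoke Lemma~\ref{lem:exists_swap} on any remaining violation of compactness until none are left. Your explicit verification that $|\coreBx|=s(p)+1$ and your rank-based potential $\Phi$ for termination are more careful than the paper's one-line ``the number of such pairs strictly decreases,'' while the blueprint-preservation remark in your last paragraph, though pleasant, is not actually needed---your first paragraph already re-establishes the hypotheses of Lemma~\ref{lem:exists_swap} from scratch (with respect to the current map's own $\extcorebis$ and $\coreBx$) at each iteration.
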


\begin{proof}
This directly follows from the repeated application of Lemma~\ref{lem:exists_swap} to any positive non-clashing teaching map $T$ for $\mathcal{B}$ of dimension $k$.
Let us assume that we do not yet have a compact teaching map. By definition, there exists $H_v\in \mathcal{H}\setminus \extcorebis$, $v\in V(H_v)$, and $B_r(u)$ such that $v\in T(B_r(u))$ and $u\notin V(H_v)$. It cannot be that $B_r(u)$ is centered in $\extcore$, since otherwise $v\in T(B_r(u))$ would imply that $H_v\in \extcorebis$. We can then apply Lemma~\ref{lem:exists_swap} to $B_r(u)$ and $v$, and replace $v$ in $T(B_r(u))$ with some vertex in $\core$. This preserves the non-clashing quality of the teaching map, and does not increase the size of the teaching sets. Moreover, since the number of such pairs $v, B_r(u)$ strictly decreases with each application, we eventually reach the point where there is no such occurrence in the teaching map, meaning that the obtained positive non-clashing teaching map is compact.
\end{proof}

We now proceed to the crux of our algorithm: the proof that one can reduce the size of the instance $(G,\mathcal{B})$ without changing its positive non-clashing teaching dimension (formalized in Lemma~\ref{lem:towards_red}). Toward this, it will be useful to focus on how the balls in $\mathcal{B}$ interact with certain subgraphs of $G$.

\begin{definition}\label{def:ind_balls}
Let $G'$ be an induced subgraph of $G$. Then, $\mathcal{B}$ \emph{induces} the set $\mathcal{B'}$ of vertex sets w.r.t.~$G'$, where $\mathcal{B'}=\{B_r(u)\cap V(G')~|~  B_r(u)\in \mathcal{B} \land u\in V(G')\}$.
\end{definition}

We note that $\mathcal{B'}$ need not necessarily be a set of balls in $G'$ itself: for instance, it may well happen that for some $B\in \mathcal{B'}$ the vertex set $B\cap V(G')$ is not even connected. However, under some conditions on $G'$ that we will be able to guarantee, $\mathcal{B'}$ is, in fact, a set of balls in $G'$.

\begin{lemma}\label{lem:ind_balls}
Let $G$ be a graph, $\mathcal{B}$ a set of balls in $G$, and $G'$ an induced subgraph of $G$. If $V(G')$ contains $X$ and, for every $H\in \mathcal{H}$, either (1) $V(H)\subseteq V(G')$ or (2) $V(H)\cap V(G') = \emptyset \land \exists H', H''\in [H]\eqB, H'\neq H'', V(H')\cup V(H'')\subseteq V(G')$ holds, then the set $\mathcal{B'}$ induced by $\mathcal{B}$ w.r.t.\ $G'$ is a set of balls in $G'$. Moreover, there is a bijection between balls in $\mathcal{B'}$ and balls in $\mathcal{B}$ centered in $G'$.
\end{lemma}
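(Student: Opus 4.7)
The plan is to handle the two conclusions separately, relying on a key observation: whenever $H\simB H'$, the canonical isomorphism $\alpha_{H,H'}$ together with its inverse and the identity on the rest of $G$ extends to an automorphism of $G$, because $\alpha_{H,H'}$ preserves both internal adjacencies within the two twin-blocks and the adjacencies between them and $X$. I will refer to this automorphism as the \emph{swap} between $H$ and $H'$.

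For the first conclusion, I would show that for every $B_r(u)\in\mathcal{B}$ centered at $u\in V(G')$, the set $B_r(u)\cap V(G')$ coincides with the ball of radius $r$ centered at $u$ inside $G'$; this reduces to proving $d_{G'}(u,v)=d_G(u,v)$ for all $u,v\in V(G')$. The inequality $d_{G'}(u,v)\geq d_G(u,v)$ is immediate since $G'$ is an induced subgraph. For the converse, take a shortest $u$--$v$ path $P$ in $G$ and decompose it into maximal sub-paths lying outside $V(G')$. Each such sub-path enters and exits via $X\subseteq V(G')$ and traverses a single component $H\in\mathcal{H}$ disjoint from $V(G')$; by hypothesis, $H$ has a twin-block $H'\subseteq V(G')$, and applying the swap between $H$ and $H'$ converts the excursion through $H$ into an excursion of identical length through $H'$. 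Reassembling yields a $u$--$v$ walk of length $|P|$ in $G'$. The bijection in part one then simply sends $B\in\mathcal{B}$ centered at $u\in V(G')$ to $B\cap V(G')$; surjectivity is by the definition of $\mathcal{B}'$.

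For injectivity, I argue by contradiction. Suppose that two distinct balls $B_1=B_{r_1}(u_1)$ and $B_2=B_{r_2}(u_2)$ of $\mathcal{B}$ with $u_1,u_2\in V(G')$ satisfy $B_1\cap V(G')=B_2\cap V(G')$. Then $B_1\triangle B_2\subseteq V(G)\setminus V(G')$, so without loss of generality there is $w\in(B_1\setminus B_2)\cap V(H)$ for some outside component $H$ with twin-blocks $H',H''\subseteq V(G')$ supplied by the hypothesis. If some twin-block of $H$ in $V(G')$ avoids both centers, the corresponding swap fixes $u_1$ and $u_2$, so applying it to $w$ produces a vertex in $V(G')\cap(B_1\setminus B_2)$, contradicting the equality assumption. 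The only remaining sub-case is $u_1\in V(H')$ and $u_2\in V(H'')$, and this is the main obstacle since no single swap fixes both centers.

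To resolve it, I would pull back the equalities $B_1\cap V(H')=B_2\cap V(H')$ and $B_1\cap V(H'')=B_2\cap V(H'')$ into comparisons inside $V(H)$ via the canonical isomorphisms. Since the swap between $H$ and $H'$ fixes $u_2$ and the swap between $H$ and $H''$ fixes $u_1$, these two equalities become
\[
\{v\in V(H):d_G(\alpha_{H',H}(u_1),v)\leq r_1\}=\{v\in V(H):d_G(u_2,v)\leq r_2\}
\]
together with the symmetric identity obtained by swapping the indices $1$ and $2$ and replacing $H'$ by $H''$. The crucial ingredient is the monotonicity $d_G(\alpha_{H^*,H}(u),v)\leq d_G(u,v)$ for $u\in V(H^*)$ and $v\in V(H)$ (with $H^*\in\{H',H''\}$), proved by the same rerouting trick used in part one: any $v$-to-$u$ path in $G$ can be turned into a walk of equal length from $v$ to $\alpha_{H^*,H}(u)$ by replacing its terminal $V(H^*)$-segment with the $\alpha$-image inside $V(H)$. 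Combining the two pulled-back equalities with this monotonicity yields $\{v\in V(H):d_G(u_1,v)\leq r_1\}=\{v\in V(H):d_G(u_2,v)\leq r_2\}$, i.e., $B_1\cap V(H)=B_2\cap V(H)$, which contradicts $w\in B_1\setminus B_2$ and completes the injectivity argument.
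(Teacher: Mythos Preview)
Your argument is correct. The distance-preservation half matches the paper's proof almost verbatim. For injectivity, however, the paper takes a shorter route that avoids your case split entirely: given $w\in(B_1\setminus B_2)\cap V(H)$ and its counterparts $w'\in V(H')$, $w''\in V(H'')$, it first argues that $w',w''\in B_1$ (otherwise $d(u_1,w)<d(u_1,w')$ would force $u_1\in V(H)$, contradicting $u_1\in V(G')$), hence $w',w''\in B_2$ by the assumed equality on $V(G')$; then from $w'\in B_2$ but $w\notin B_2$ it infers $u_2\in V(H')$, and symmetrically $u_2\in V(H'')$, which is impossible since $H'\neq H''$. Both arguments rest on the same swap-automorphism/monotonicity principle, but the paper never needs to isolate the ``hard'' configuration $u_1\in V(H')$, $u_2\in V(H'')$ or to pull the two equalities back into $V(H)$: using \emph{two} twin-blocks $H',H''$ simultaneously pins $u_2$ into two disjoint components in one stroke. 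Your route is a bit longer but has the merit of making the inequality $d_G(\alpha_{H^*,H}(u),v)\le d_G(u,v)$ explicit, which the paper's phrase ``$d(u,w)<d(u,w')$ implies $u\in V(H)$'' uses without spelling out.
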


\begin{proof}
The key element of this proof is that in such a graph $G'$, the distance $d_{G'}(u,v)$ between two vertices $u$ and $v$ is the same as the distance $d_G(u,v)$ between them in $G$. Indeed, for $u,v\in V(G')$, suppose that the shortest path between them in $G$ uses some vertices not present in $G'$. For a consecutive set of such missing vertices on the path, since they are missing in $G'$ and connected in $G$, they are not in $X$ and they are all in the same component $H\in  \mathcal{H}$. Since $H$ is missing in $G'$, the vertices directly before and after on the path are from $X$: these vertices exist because $u,v \in V(G')$. Since there is $H'\in [H]\eqB$ such that $V(H')\subseteq V(G')$, we will replace the vertices of the path in $H$ by the $\simB$-equivalent vertices of $H'$. The newly constructed path has the same length, is still a path by the definition of $H\simB H'$, and has strictly fewer vertices not in $G'$. Iterating this argument until we obtain a path with only vertices of $G'$, we prove that the distance between $u$ and $v$ is the same in $G'$ as it was in $G$. From now on, we denote this distance $d(u,v)$, without referring to the graph.

We are now ready to prove that the vertex sets in $\mathcal{B'}$ are indeed balls in $G'$. Let $B'\in \mathcal{B'}$. There exist $u\in V(G')$ and $B_r(u)\in \mathcal{B}$, such that $B'=B_r(u)\cap V(G')=\{v\in V(G'), d(u,v)\leq r\}$. Hence, $B'$ is indeed a ball in $G'$.
It remains to prove that this construction is injective. Let $u,v\in V(G')$ and $B_r(u), B_{r'}(v)\in \mathcal{B}$, such that $B_r(u)\cap V(G')=B_{r'}(v)\cap V(G')$ and assume there is $w\in V(G)\setminus V(G')$ such that $w\in B_r(u)$ and $w\notin B_{r'}(v)$. Let $H\in \mathcal{H}$ be such that $w\in V(H)$, and $H', H''\in [H]\eqB$ are the two corresponding components whose existence is required by (2), and $w'\in V(H')\cap [w]\eqB$ ($w''\in V(H'')\cap [w]\eqB$, resp.). If $w'\notin B_r(u)$, then it implies that $d(u,w)<d(u,w')$, and thus, $u\in V(H)$, which is a contradiction since $V(G')\cap V(H)=\emptyset$.
With the same argument for $w''$, we infer that $w', w''\in B_r(u)$, and hence, $w', w''\in B_{r'}(v)$ since $w',w''\in V(G')$. However, $w'\in B_{r'}(v)$ implies that $d(v,w')<d(v,w)$, which in turn implies that $v\in V(H')$. We also obtain $v\in V(H'')$ by the same reasoning with $w''$, and this leads to a contradiction since $V(H')\cap V(H'')=\emptyset$. 
We proved that it was impossible to find two such balls, which proves the claimed bijection between balls in $\mathcal{B'}$ and balls in $\mathcal{B}$ centered in $G'$.
\end{proof}

We define the \emph{reduced graph $G'$ of $G$} as the graph obtained from $G$ by removing all but $f(p)=c(p)+\binom{p\cdot c(p)+p}{s(p)}^{b(p)} +1$ twin-blocks from each large class of $\simB$, where $b(p)=\bigo(p^3)$ is the maximum number of distinct balls centered in any component of $\mathcal{H}$. 
Let $\mathcal{B'}$ be the set induced by $\mathcal{B}$ on $G'$ according to Definition~\ref{def:ind_balls}. By Lemma~\ref{lem:ind_balls}, $\mathcal{B'}$ is a set of balls in $G'$. The size of $G'$ is at most $g(p)=p+f(p)\cdot 2^{\bigo(p^3)}=2^{2^{\bigo (p^3)}}$. Thus, an optimal positive non-clashing teaching map for $\mathcal{B'}$ can be computed (\emph{e.g.}, by brute force) in time that depends only on $p$.
We now aim at proving that $\mathcal{B'}$ is equivalent to $\mathcal{B}$, \emph{i.e.}, that the optimal positive non-clashing teaching dimension is the same for both.

\begin{lemma}\label{lem:towards_red}
Suppose that there is a solution for $(G,\mathcal{B},k)$. Then, there exists a solution $T'$ for the reduced instance $(G',\mathcal{B'}, k)$.
\end{lemma}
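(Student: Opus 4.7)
The plan is to take a compact positive non-clashing teaching map $T$ of dimension at most $k$ for $\mathcal{B}$ (guaranteed by Lemma~\ref{lem:exists_compact}), reshape it via twin-block swaps so that all its teaching sets sit inside $V(G')$, and then obtain $T'$ by restriction. The first fact I would extract from compactness is that, for every ball $B\in \mathcal{B}$, $T(B)\subseteq X\cup V(\extcorebis)\cup V(H_B)$, where $H_B\in \mathcal{H}$ is the component containing the center of $B$ (treated as empty if the center is in $X$). For $B$ centered outside $\extcorebis$ this is essentially Definition~\ref{def:compact}; for $B$ centered in $\extcore\cup X$ it follows from the contrapositive of Definition~\ref{def:ext_core}; and for $B$ centered in $\extcorebis\setminus \extcore$ it follows by combining that same contrapositive with compactness. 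Consequently, once we achieve $V(\extcorebis)\subseteq V(G')$, every ball $B$ centered in $V(G')$ will automatically satisfy $T(B)\subseteq V(G')$.

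The main obstacle is this relocation step, since $\extcorebis$ is defined with respect to $T$ and may initially contain components outside $V(G')$. My approach is a swap argument along twin-block isomorphisms: for any $H,H'\in \mathcal{H}$ with $H\simB H'$, the bijection acting as the canonical isomorphism $\alpha_{H,H'}$ on $V(H)\cup V(H')$ and as the identity elsewhere is a $\mathcal{B}$-preserving automorphism $\sigma$ of $G$ (this uses both conditions of Definition~\ref{def:twinblocks}). Transporting $T$ along $\sigma$ gives an equally good compact non-clashing map $T^\sigma$ of the same dimension, whose $\extcorebis$ equals $\sigma(\extcorebis)$. Now in any large $\simB$-class, at most $c(p)\leq f(p)-1$ of the $f(p)$ kept components belong to $\extcorebis$ (by Observation~\ref{obs:ext_core} and the choice of $f(p)$), so whenever some $H\in \extcorebis\setminus V(G')$ remains, a witness $H'\in [H]\eqB\cap V(G')\setminus \extcorebis$ is available for swapping. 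Each such swap strictly decreases $|\extcorebis\setminus V(G')|$, so after finitely many (at most $c(p)$) iterations we have $V(\extcorebis)\subseteq V(G')$.

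With this normalized $T$ in hand, I would define $T'(B'):=T(B)$ for the unique $B\in \mathcal{B}$ corresponding to $B'\in \mathcal{B'}$ via the bijection of Lemma~\ref{lem:ind_balls}. By the containment from the first paragraph, $T(B)\subseteq V(G')$, and combined with the positivity of $T$ this yields $T'(B')\subseteq B\cap V(G')=B'$; hence $T'$ is a positive teaching map of dimension at most $k$. For the non-clashing property, any vertex $w\in T(B_1)\cup T(B_2)$ witnessing non-clashing for a pair $B_1,B_2$ in $T$ lies in $V(G')$ by the displayed containment, so since $B_i'=B_i\cap V(G')$, it witnesses non-clashing of the corresponding $B_1',B_2'$ in $T'$ as well. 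The delicate point I expect to need careful verification is that compactness, the definitions of $\core$, $\extcore$, $\extcorebis$, and the non-clashing and positivity conditions all transport cleanly under the automorphism $\sigma$ at each swap step; this is essentially mechanical, but indispensable for concluding that each swap yields a valid compact positive non-clashing teaching map to which the next swap can be applied.
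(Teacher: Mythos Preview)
Your proposal is correct and follows essentially the same route as the paper: take a compact $T$ via Lemma~\ref{lem:exists_compact}, arrange that $\extcorebis\subseteq V(G')$, and then restrict. The paper compresses your explicit automorphism-swap argument into the single phrase ``up to renaming, $G'$ corresponds to $G$ in which we remove only components not in $\extcorebis$,'' exploiting that the choice of which $f(p)$ twin-blocks to keep per $\simB$-class was arbitrary; the remaining verification (teaching sets stay inside $V(G')$ by compactness, distinguishing vertices survive restriction) is identical to yours.
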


\begin{proof}
Applying Lemma~\ref{lem:exists_compact} gives us the existence of a compact solution $T$. Up to renaming, $G'$ corresponds to $G$ in which we remove only components \textbf{not} in $\extcorebis$ for $T$. Let us define $T'$ as the restriction of $T$ to $\mathcal{B'}$---this is well-defined since each ball in $\mathcal{B'}$ corresponds to exactly one in $\mathcal{B}$ as per Lemma~\ref{lem:ind_balls}. 
It is easy to see that if two balls in $\mathcal{B'}$ were in conflict, then the corresponding balls in $\mathcal{B}$ would be as well. Indeed, the teaching sets are exactly the same, since we did not remove any component of $\extcorebis$ and $T$ is compact. Furthermore, the balls themselves can only be smaller in $\mathcal{B'}$, and thus, the element used by $T$ to distinguish between the two balls of $\mathcal{B}$ can be used by $T'$ to distinguish between the two balls in $\mathcal{B'}$. 
\end{proof}

The following lemma establishes that a solution for the reduced instance can be lifted to one for the original instance; note that this also ensures that our algorithm will be constructive.

\begin{lemma}\label{lem:from_red}
Let $k\in \mathbb{N}$ and $T'$ be a compact solution for the instance $(G', \mathcal{B'}, k)$. We can construct a compact solution $T$ for $(G,\mathcal{B}, k)$ in $2^{2^{\bigoh(p^3)}} \cdot |V(G)|^{\bigoh(1)}$ time.
\end{lemma}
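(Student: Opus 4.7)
I will construct $T$ by lifting $T'$ through the canonical isomorphisms of the twin-blocks in $G$. For each $B = B_r(u) \in \mathcal{B}$ with $u \in V(G')$, I set $T(B) := T'(B')$, where $B'$ is the ball of $\mathcal{B'}$ corresponding to $B$ via the bijection of Lemma~\ref{lem:ind_balls}. For each $B = B_r(u) \in \mathcal{B}$ whose center lies in a component $H$ removed during the reduction, I assign $H$ to one of the $f(p)$ representatives $H^*$ of $[H]\eqB$ retained in $G'$ (via an assignment rule defined below) and lift $T'(B_r(\alpha_{H,H^*}(u)))$ back to $G$ by the map that is the identity on $V(G') \setminus V(H^*)$ and $\alpha_{H^*,H}^{-1}$ on $V(H^*)$. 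By construction, $|T(B)| \leq k$ for each $B$; the containment $T(B) \subseteq B$ follows from standard distance-preservation arguments (Lemma~\ref{lem:ind_balls} and the $\simB$-isomorphism), and $T$ is compact because the lifting only introduces vertices of $V(H)$, and they appear only in teaching sets of balls centered in $V(H)$.

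The key step is verifying non-clashing. For any two balls $B_1, B_2 \in \mathcal{B}$, I case-analyze based on where their centers lie. When both lie in $V(G')$, the property transfers directly from $T'$ since the teaching sets and the balls intersected with $V(G')$ both agree with their $G'$-counterparts. When exactly one ball (say $B_2$) is centered in a removed component $H_2$, I compare $B_1$ with the ball $B_2^* \in \mathcal{B'}$ obtained by applying the canonical isomorphism to the center of $B_2$; the distinguishing vertex provided by $T'$ then remains a valid witness in $G$, once lifted appropriately. The hardest sub-case is when both balls lie in removed components of the same $\simB$-class. Here I exploit the choice $f(p) = c(p) + \binom{p \cdot c(p) + p}{s(p)}^{b(p)} + 1$: since $T'$ is compact, at most $c(p)$ representatives of any class lie in the 2-extended core (with respect to $T'$), leaving at least $\binom{p \cdot c(p) + p}{s(p)}^{b(p)} + 1$ ``spare'' representatives whose teaching sets under $T'$ assume at most $\binom{p \cdot c(p) + p}{s(p)}^{b(p)}$ distinct patterns; by pigeonhole, two of them become $\simBS$-equivalent under $T'$. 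The assignment rule distributes removed components into distinct representatives as much as possible, and otherwise into such a $\simBS$-equivalent pair so that the two resulting balls in $\mathcal{B'}$ remain distinct and are distinguished by $T'$; a distance argument (using that paths between different components must traverse $X$) then lifts the witness back to a distinguishing vertex in $G$.

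The running time is $2^{2^{\bigoh(p^3)}} \cdot |V(G)|^{\bigoh(1)}$: preprocessing the $\simB$-classes and canonical isomorphisms takes $2^{\bigoh(p^3)} \cdot |V(G)|^{\bigoh(1)}$ time (Observation~\ref{obs:numbclass} together with $p^{\bigoh(p)}$-time per-component isomorphism tests), and the lifting processes each of the $\bigoh(|V(G)|^2)$ balls in time proportional to the size of its teaching set, bounded by $s(p) = 2^{\bigoh(p^3)}$ (Corollary~\ref{cor:boundedsol}). The main obstacle is the last non-clashing sub-case: ensuring that the assignment $H \mapsto H^*$ together with the lifting guarantees that the teaching sets for any two balls centered in the same $\simB$-class distinguish them, which is where the interplay between $f(p)$, $\simBS$-equivalence, and the symmetry of distances through $X$ becomes essential.
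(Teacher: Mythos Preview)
Your overall strategy matches the paper's: exploit the pigeonhole principle on the $f(p)$ retained representatives of each $\simB$-class to locate two copies $H',H''$ with identical teaching behavior, and copy teaching sets from one of them to the removed components. The paper carries this out iteratively (adding one removed component at a time and re-verifying against the current partial solution), whereas you propose a one-shot lifting; both routes can be made to work.

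There are, however, two genuine gaps in your plan. First, the pigeonhole count $\binom{p\cdot c(p)+p}{s(p)}^{b(p)}$ produces $H',H''$ whose balls have \emph{exactly the same} teaching sets on $V(\extcoreT)\cup X$ (and isomorphic ones on their own components). This is strictly stronger than $\simBS$-equivalence: the blueprint records, outside $X$ and the own block, only how many $\simB$-equivalents occur in the teaching set, capped at~$2$, which does not let you conclude that the two lifted teaching sets coincide. Your argument needs---and your pigeonhole count actually delivers---the exact-pattern property, so state and use that. Second, your ``assignment rule'' is ill-defined: a $\simB$-class can contain arbitrarily many removed components, so distributing them into distinct representatives, or into a single pair $\{H',H''\}$, cannot keep the images distinct. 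The right fix is that, by the exact-pattern property, lifting from $H'$ and from $H''$ yield the \emph{same} teaching set; hence simply map every removed component to $H'$ and use $H''$ only for verification. When two balls $B_1,B_2$ collapse to the same ball in $H'$, reinterpret $T(B_1)$ as a lift from $H''$ (it is the same set), so that the two preimages in $\mathcal{B}'$ become distinct balls---one in $H'$, one in $H''$---which $T'$ distinguishes; the distance-through-$X$ argument then transfers the witness back to $G$. The same maneuver is required in your ``exactly one removed'' case whenever the other ball happens to be centered in $H'$; your case split omits this. The paper's iterative formulation avoids both pitfalls by leaning on the already-established non-clashing of the partial map $T^*$ at each step.
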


\begin{proof}
We iteratively add back the missing components $(H_0,\dots, H_x)$ of $\mathcal{H}$ and update the set of balls accordingly. We start by choosing an arbitrary ordering $\prec$ on $\mathcal{H}$ such that: $\forall H,H'\in \mathcal{H}, V(H)\subseteq V(G') \land V(H')\cap V(G')=\emptyset \Rightarrow H\prec H'$. We then compute $\extcoreT$ for $T'$ and $\prec$ on $G'$, as it is used at each step. The time necessary for it depends only on $p$ and is $\bigoh(|\mathcal{B'}|\cdot s(p))$, as it suffices to check for each ball which components its teaching set uses, apart from the component where it is potentially centered. Indeed, as $T'$ is compact, $H$ being in $\extcoreT$ implies there is a ball not centered in $H$ using $x\in V(H)$ in its teaching set. We prove the lemma via the following claim.

\begin{claim}\label{cl:induction_add_copy}
 Let $0 \leq c < x$, $\mathcal{B^*}$ be the set of balls induced by $\mathcal{B}$ on $G^*=G[V(G')\bigcup_{0\leq i < c} V(H_i)]$, and let $T^*$ be a solution for $(G^*,\mathcal{B^*},k)$ such that $\extcoreT$ is the $2$-extended core for $T^*$. We can construct, in $2^{2^{\bigoh(p^3)}} \cdot |V(G)|^{\bigoh(1)}$ time, a solution $T^+$ for $(G^+,\mathcal{B^+},k)$, where $G^+=G[V(G^*)\cup V(H_{c})]$, $\mathcal{B^+}$ is induced by $\mathcal{B}$ on $G^+$, and $\extcoreT$ is the $2$-extended core for $T^+$. 
 \end{claim}

\begin{proof}
It is easy to observe that $\mathcal{B^*}$ is induced by $\mathcal{B^+}$ on $G^*$. Thus, by Lemma~\ref{lem:ind_balls}, there is a bijection between the balls in $\mathcal{B^*}$ and those of $\mathcal{B^+}$ which are centered in $G^*$. We can compute $\mathcal{B^+}$ in $|V(G)|^{\bigoh(1)}\cdot p^2$ time, since for the at most $|V(G)|\cdot p^2$ balls in $\mathcal{B^+}$, a breadth-first search on $G^+$ suffices to compute it. Note that for any ball $B^*\in \mathcal{B^*}$, the corresponding ball $B^+ \in \mathcal{B^+}$ is a superset of it, and we set $T^+(B^+)=T^*(B^*)$. Note that for any two balls whose teaching set we define in this way, there can be no conflict between them. Indeed, any element previously distinguishing their respective equivalent balls in $\mathcal{B'}$ is in $V(G')$, and thus, will still be at the same distance from each center, meaning it distinguishes the two balls in $\mathcal{B^+}$ as well.

However, not all balls in $\mathcal{B^+}$ are centered in $G^*$: we now need to define $T^+$ for the balls centered in $H_c$. 
Toward this end, we first need to identify some components in $G^*$ which will be useful to define the teaching map so that it is non-clashing.

We know that there are at least $f(p)$ components of $[H_{c}]\eqB$ in $G'$ by definition, and thus, in $G^*$. Since $f(p)=c(p)+\binom{p\cdot c(p)+p}{s(p)}^{b(p)} +1$, there are at least $\binom{p\cdot c(p)+p}{s(p)}^{b(p)} +1$ components outside of $\extcoreT$. Since all of these components are twin-blocks, they have the same balls (at most $b(p)$ of them), and by the pigeonhole principle, since each ball has a teaching set consisting of at most $s(p)$ vertices of the components of $\extcoreT \cup X$ and the component the ball is centered in (which contain at most $p\cdot c(p) + p$ vertices combined), at least 2 of them have, for each pair of $\simB$-equivalent balls, the exact same teaching sets in $\extcoreT \cup X$ as well as isomorphically identical teaching sets in their own components. Let us denote these two components as $H',H''$.

Let $u\in V(H_c)$, $r\in \mathbb{N}$ be such that $B^+_r(u)\in \mathcal{B^+}$. We ``copy'' the teaching set $T^*(B^*_r(v))$, where $v\in V(H') \cap [u]\eqB$. Formally, for $\alpha$ the canonical isomorphism from $H'$ to $H_c$, $T^+(B^+_r(u))= \{\alpha(w) \mid w\in T^*(B^*_r(v))\cap H' \} \cup \{w \mid w\in T^*(B^*_r(v)), w\notin H'\}$. Thus, from the point of view of any ball centered outside of $H'$ and $H_c$, $B^+_r(u)$ and $T^+(B^+_r(u))$ behave the same as $B^*_r(v)$ and $T^*(B^*_r(v))$, and so, there are no conflicts with such balls, as they were not in conflict with~$B^*_r(v)$. Two balls centered in $H_c$ are also distinguished from each other since this is the case for pairs of balls centered in $H'$. The last thing to check is that balls centered in $H_c$ are distinguished from balls centered in $H'$. Here, we use the fact that balls in $H''$ have the same teaching sets (up to isomorphism between vertices of $H'$ and $H''$) as those in $H'$, and so, if there was a conflict between balls centered in $H'$ and $H_c$, there would also be one between balls centered in $H'$ and~$H''$.

Thus, we managed to construct a teaching map $T^+$ for $\mathcal{B^+}$. Since the new teaching sets are imitations of pre-existing teaching sets in $T^*$, the size constraint is satisfied, and it is easy to check that the solution is still compact. Indeed, the teaching sets of balls centered in $H'$ contained only vertices of $H'$ and $\extcoreT$, and thus, those of balls centered in $H_c$ contain by construction only vertices of $H_c$ and $\extcoreT$. Furthermore, $H_c$ is not in $\extcorebis$ for $T^+$: indeed, $H_c\notin \core$ because there are already enough components $\simB$ equivalent to $H_c$ in $G'$, and all of them are preceding $H_c$ in the order $\prec$. Moreover, no vertex of $H_c$ has been added to the teaching sets of the balls not centered in $H_c$: by definition, $H_c\notin \extcorebis$. Thus, $\extcorebis=\extcoreT$.

Note that we can find $H'$ and $H''$ in time $f(p)^2\cdot (c(p)+p)^{s(p)\cdot \bigoh(p^3)}$, and then we construct teaching sets for the at most $b(p)$ balls in $H_c$, each of them being a simple copy of the equivalent teaching set in $H'$, which can be done in $\bigoh(s(p))$ time. Thus, the running time is $f(p)^2\cdot (c(p)+p)^{s(p)\cdot \bigoh(p^3)}\bigoh(s(p)) \cdot |V(G)|^{\bigoh(1)}= 2^{2^{\bigoh(p^3)}} \cdot |V(G)|^{\bigoh(1)}$ and we have proven the claim.
\end{proof}
We can now finish proving the lemma by induction using Claim~\ref{cl:induction_add_copy}. Since the hypothesis trivially holds for $c=0$ (by considering $G^*=G'$), we prove the claimed result, and the total running time follows from the fact that the number of components of $\mathcal{H}$ missing in $G'$ are at most $|V(G)|$.
\end{proof}

Finally, we obtain our main parameterized tractability result by combining the previous ingredients. In particular, in the proof, we construct $(G',\mathcal{B}',k)$, solve the problem there, and argue correctness.

\begin{theorem}\label{thm:fpt_vi}
\normalp\ is \FPT\ parameterized by the vertex integrity of the input graph $G$.
\end{theorem}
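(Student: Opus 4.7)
The plan is to combine the structural results established earlier into a kernelization-style algorithm. Given an instance $(G,\mathcal{B},k)$, first invoke the algorithm of~\cite{fellows1989immersion} to compute in time $p^{\bigo(p)}\cdot |V(G)|$ a witness $X$ for the vertex integrity, and then compute the set $\mathcal{H}$ of connected components of $G-X$ together with their partition into $\simB$-equivalence classes; testing $H\simB H'$ for a fixed pair requires only $p^{\bigo(p)}$ time by enumerating all candidate isomorphisms $\alpha$, and there are at most $|V(G)|^2$ pairs to consider.

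Next, construct the reduced graph $G'$ of $G$ exactly as defined just before Lemma~\ref{lem:towards_red}: for every $\simB$-class containing more than $f(p)$ components, discard all but $f(p)$ representatives, keeping all others intact. Let $\mathcal{B}'$ be the vertex sets induced by $\mathcal{B}$ on $G'$ as in Definition~\ref{def:ind_balls}. Since $f(p)\geq 2$ and every component either remains entirely in $G'$ or is removed entirely while at least two of its $\simB$-equivalents survive, Lemma~\ref{lem:ind_balls} guarantees that $\mathcal{B}'$ is itself a set of balls in $G'$, with a natural bijection between $\mathcal{B}'$ and the balls of $\mathcal{B}$ centered in $V(G')$. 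Crucially, $|V(G')|\leq g(p)=2^{2^{\bigo(p^3)}}$ depends only on $p$, so we can now solve $(G',\mathcal{B}',k)$ by brute force in time depending only on $p$: enumerate every candidate teaching map of dimension at most $k$, checking each for the positive non-clashing condition in polynomial time.

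If no such teaching map for $\mathcal{B}'$ exists, then by the contrapositive of Lemma~\ref{lem:towards_red} (applied to the input instance), $(G,\mathcal{B},k)$ is a no-instance and we reject. Otherwise, fix any solution $T'$ for $(G',\mathcal{B}',k)$; by Lemma~\ref{lem:exists_compact} applied to $\mathcal{B}'$ in $G'$, we may assume (after iteratively applying the swap argument of Lemma~\ref{lem:exists_swap}) that $T'$ is compact. We then feed $T'$ into the constructive procedure of Lemma~\ref{lem:from_red}, which in time $2^{2^{\bigo(p^3)}}\cdot |V(G)|^{\bigo(1)}$ produces a compact solution $T$ for $(G,\mathcal{B},k)$ of the same dimension; this $T$ serves simultaneously as a yes-certificate and as a witnessing teaching map.

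Correctness of the two directions is already entirely handled by Lemmas~\ref{lem:towards_red} and~\ref{lem:from_red}, so the only remaining bookkeeping is the running time: the vertex integrity decomposition, equivalence class computation, and construction of $(G',\mathcal{B}')$ each cost at most $2^{2^{\bigo(p^3)}}\cdot |V(G)|^{\bigo(1)}$; the brute-force step on the bounded instance depends only on $p$; and the lifting step has the same asymptotic bound. The overall running time is therefore of the form $h(p)\cdot |V(G)|^{\bigo(1)}$ for a computable function $h$, which establishes fixed-parameter tractability. No genuinely new idea beyond what has been developed is required — the main conceptual obstacle (proving that a bounded-size induced subgraph retains exactly the same optimal teaching dimension, via the core/extended-core machinery and compact solutions) has already been discharged by the preceding lemmas, so this final step is essentially an assembly argument.
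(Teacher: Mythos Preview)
Your proposal is correct and follows essentially the same approach as the paper: compute the vertex-integrity witness, build the reduced instance $(G',\mathcal{B}',k)$ of size bounded in $p$, brute-force it, and then invoke Lemmas~\ref{lem:towards_red} and~\ref{lem:from_red} for the two directions. You are in fact slightly more explicit than the paper in noting that the solution $T'$ obtained on $G'$ must be made compact (via Lemma~\ref{lem:exists_compact}) before Lemma~\ref{lem:from_red} can be applied, a step the paper leaves implicit; conversely, the paper adds the harmless observation that one may assume $k\le s(p)$ up front, which you omit but do not need.
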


\begin{proof}
Let $G$ be a graph with vertex integrity $p$, $\mathcal{B}$ a set of balls in $G$, and $k\in \mathbb{N}$. We provide an algorithm computing a positive non-clashing teaching map for $\mathcal{B}$ of dimension at most $k$, or correctly outputing that none exists, in $q(p)\cdot |V(G)|^{\bigoh(1)}$ time, where $q$ is an elementary function.
It is useful to note that if $k\geq s(p)$, then Corollary~\ref{cor:boundedsol} implies there is always a solution. Moreover, replacing the value of $k$ by $s(p)$ for the rest of the algorithm does not hurt: the solution still exists and has dimension smaller than $k$. Hence, we assume in the rest of the proof that $k\leq s(p)$.

The first step of the actual algorithm is to compute the witness $X\subset V(G)$ for the vertex integrity and the corresponding set  $\mathcal{H}$ of connected components.
Next, we classify the elements of $\mathcal{H}$ w.r.t.~the equivalence classes defined by $\simB$ (see Definition~\ref{def:twinblocks}). Since there are at most $2^{\bigo(p^3)}$ equivalence classes and the equivalence between two components can be tested in $p^{\bigoh(p)}$ time, we can compute the equivalence classes in $|V(G)|\cdot2^{\bigo(p^3)}\cdot p^{\bigoh(p)}$ time with brute force.

We are now ready to compute the reduced graph $G'$ of $G$. We recall that to this end it suffices to remove some arbitrary components of $\mathcal{H}$ whose equivalence class are bigger than $f(p)$, which can be done in linear time. Recalling Definition~\ref{def:ind_balls}, the set $\mathcal{B'}$ induced by $\mathcal{B}$ on $G'$ can be computed in $|V(G')|\cdot \bigoh(p^2)\cdot |V(G)|\cdot |V(G')|$ time; indeed, it suffices to check for $u\in V(G')$ and $r\in \mathbb{N}$ (which is at most the diameter of the graph) whether $B_r(u)$ exists in $\mathcal{B}$, and then to compute the intersection of it with $V(G')$. 
The size of the instance $(G', \mathcal{B'}, k)$ is upper-bounded by a function of $p$, meaning that we can compute a positive non-clashing teaching map of dimension at most $k$ for $\mathcal{B'}$ in time depending only on $p$ (or determine that none exists)---see Proposition~\ref{pro:exactalgo}.

Using Lemma~\ref{lem:towards_red}, we know that if there is no such teaching map for $\mathcal{B'}$, there is also none of dimension $k$ for $\mathcal{B}$: the algorithm can safely output that no solution exists.
Conversely, if we obtain a positive non-clashing teaching map for $\mathcal{B'}$, then we can use Lemma~\ref{lem:from_red} to construct a positive non-clashing teaching map of dimension at most $k$ for $\mathcal{B}$ in $2^{2^{\bigoh(p^3)}}\cdot |V(G)|^{\bigoh(1)}$ time.
This concludes the proof, and the total running time can be upper-bounded by $2^{2^{\bigoh(p^3)}} \cdot |V(G)|^{\bigoh(1)}$.
\end{proof}

\section{Hardness for Classical Structural Parameterizations}\label{sec:whard}

In this section, we prove that \normalp\ is intractable w.r.t.~the feedback vertex number and pathwidth; the same then holds for generalizations of these measures like treewidth or clique-width.

\begin{theorem}\label{thm:hard-fvs}
	\normalp\ is \W\textup{[1]}-hard parameterized by $\fvs(G)+\pw(G)+k$. 
\end{theorem}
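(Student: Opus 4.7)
The plan is to prove Theorem~\ref{thm:hard-fvs} by a parameterized reduction from \textsc{Multicolored Clique}, which is well known to be $\W[1]$-hard when parameterized by the clique size~$k$. Given an instance with color classes $V_1, \ldots, V_k$ of a graph $H$, I would build an instance $(G, \mathcal{B}, k')$ of \normalp\ in which $k'$, $\fvs(G)$, and $\pw(G)$ are all bounded by a function of~$k$.

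The skeleton of $G$ will be a long spine path, so that after deleting a small set of anchor vertices what remains is a disjoint union of paths and subdivided caterpillars. Recalling the preliminary fact that subdivided caterpillars have pathwidth~$2$ and that deleting one vertex per component decreases the pathwidth by at most one, this ensures $\pw(G) = O(k^{\bigoh(1)})$; the same anchor set will double as a feedback vertex set, so $\fvs(G) = O(k^{\bigoh(1)})$ as well. For each color class $i \in [k]$, I would attach a \emph{selection gadget} at an anchor $a_i$ on the spine, consisting of one pendant subpath per vertex of $V_i$, each of a distinct, carefully chosen length. The set $\mathcal{B}$ would contain a family of nested concentric balls centered at $a_i$ whose inclusion structure forces any non-clashing teaching map of dimension~$k'$ to distinguish them via exactly one pendant; that pendant thereby encodes the chosen vertex of the multicolored clique.

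For each pair $(i,j)$, I would add a \emph{verification gadget} hung off further anchors, populated with balls whose radii are tuned so that a conflict arises precisely when the pendants encoding $u\in V_i$ and $v\in V_j$ are the marked ones but $\{u,v\}\notin E(H)$. Concretely, for each non-edge $\{u,v\}$ of $H$, a dedicated pair of ``twin'' balls would be designed so that the only potential distinguisher lies on one of the two pendants, which in turn forces the encoded $u$ and $v$ to be adjacent in $H$. The remaining ``background'' balls (those not involved in selection or verification) would be handled by giving them cheap, canonical teaching sets using spine vertices, so as not to eat into the budget~$k'$. Correctness would then be proven in both directions: a $k$-clique in $H$ yields an explicit compact teaching map of dimension~$k'$, and conversely, the structural forcing imposed by the selection and verification gadgets shows that any valid teaching map of dimension~$k'$ projects onto a multicolored clique of~$H$.

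The main obstacle is the tension between the dimension budget~$k'$, which must remain a function of~$k$ alone, and the need to enforce a combinatorially rich structure (a $k$-clique) using only a graph of bounded pathwidth. This forces a design in which each chosen vertex is encoded by a \emph{single} teaching point per gadget, while non-clashing constraints ``broadcast'' that choice to every verification gadget through ball inclusions only, without the teaching sets themselves having to mention the other color classes. Engineering the pendant lengths and ball radii so that this forcing happens and no spurious clashes appear, while simultaneously keeping the graph essentially path-like, is the delicate technical core of the reduction; the gadget design will need to be sufficiently symmetric that the YES-direction construction of a teaching map is easy to verify, yet sufficiently rigid that the NO-direction forcing argument has no escape route.
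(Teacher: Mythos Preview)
What you have written is a plan, not a proof: you explicitly defer the ``delicate technical core'' of choosing pendant lengths and radii, and that is precisely where the reduction either works or collapses. As stated, the verification gadget has a genuine gap. You say that for each non-edge $\{u,v\}$ you will place a pair of twin balls whose only potential distinguisher lies on the $u$-pendant or the $v$-pendant. But the non-clashing condition lets those two balls be distinguished by \emph{their own} teaching sets; nothing ties $T(B)$ for a twin ball to the global selection encoded in $T(V(G))$. So a teaching map can simply put one $u$-pendant vertex into the teaching set of each twin ball and satisfy all non-edge constraints regardless of which clique vertices were ``selected''. If instead you intend the twin pair to be $V(G)$ and some $B_{uv}\subsetneq V(G)$, you face a different obstacle: you then need $V(G)\setminus B_{uv}$ to consist of all colour-$i$ pendants except the $u$-pendant and all colour-$j$ pendants except the $v$-pendant. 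In a graph that is ``essentially path-like'' after deleting $O(k)$ vertices, a ball intersected with a family of pendants hanging off a common anchor is always a \emph{threshold} set (pendants of length at most some value), never the set ``exactly the pendant labelled $u$''. Getting a ball to single out one specific pendant requires a direct short connection from the ball's centre to that pendant, and doing this for every non-edge creates $\Theta(|V(H)|^2)$ centres each sitting on many cycles; the feedback vertex set is then no longer bounded by a function of $k$.

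The paper sidesteps exactly this obstacle by reducing from \textsc{NAE-Integer-3-SAT} rather than \textsc{Multicolored Clique}. There, each variable is a path $P^x$ of length $d$, and the selection is a \emph{position} on that path. A clause constraint is an inequality $x\le c_x$, which is an interval condition and is captured cleanly by a ball: connecting a clause vertex $c$ to the endpoints of the variable paths via paths of suitably offset lengths makes $B_{r_c}(c)$ contain precisely a suffix of each relevant $P^x$. The balls $B_{6d}(f_x)=V(G)\setminus P^x$ force $T(V(G))$ to pick one vertex per path (the assignment), and comparing $V(G)$ against $B_{r_c}(c)$ and $B_{r'_{c'}}(c')$ enforces NAE. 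Because every gadget attaches only to the path endpoints $x_1,x_d$ and to the $f_x$'s, deleting these $3|\mathcal{X}|$ vertices leaves a disjoint union of paths and subdivided caterpillars, giving the claimed bounds on $\fvs$ and $\pw$. If you want to salvage your approach, you would need either to replace ``vertex $u$'' by an integer encoding and accept threshold-style constraints (at which point you are essentially redoing the paper's reduction), or to supply a concrete mechanism that links the twin balls' distinguishers to the global selection without blowing up the feedback vertex set.
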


We establish Theorem~\ref{thm:hard-fvs} by describing a parameterized reduction below and proving its correctness in Lemma~\ref{lem:w-correctness}.
We reduce from a problem called {\sc NAE-Integer-3-SAT}, which is \W[1]-hard parameterized by the number of variables~\cite{BHML16}, and is defined as follows.

\defquestion{{\sc NAE-Integer-3-SAT}}{A set of clauses $\mathcal{C}$ over variables $\mathcal{X}$, and an integer $d$. Any clause $c\in\mathcal{C}$ has the form $x\leq c_x$, $y\leq c_y$, and $z\leq c_z$, where $c_x,c_y,c_z\in \{1,\ldots,d\}$.}{Is there a variable assignment $\mathcal{X}\rightarrow \{1, \ldots, d\}$ such that for each clause, either $1$ or $2$ of its three inequalities are satisfied?}

\noindent\textit{The Reduction.}
Given an instance $\phi$ of {\sc NAE-Integer-3-SAT}, we construct an instance $(G,\mathcal{B},k)$ of \normalp\ in polynomial time as follows, starting with the graph $G$.

	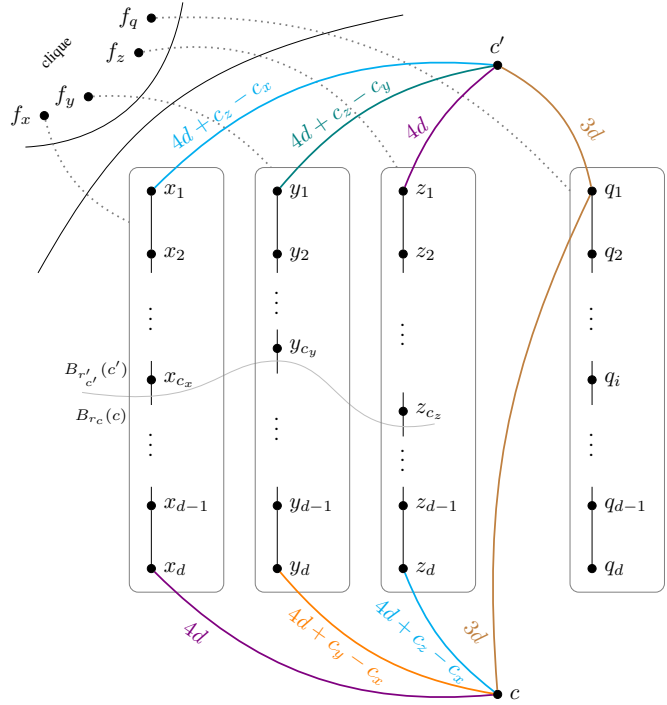
\begin{figure}[t]
	\scalebox{0.837}{\centering
		\begin{tikzpicture}
\def\y{3}\def\x{-.5}

\node[circ, label=right:{$x_1$}] (x1) at (\x, \y+6) {};
\node[circ, label=right:{$x_2$}] (x2) at (\x, \y+5) {};
\node[circ, label=right:{$x_{c_x}$}] (xc) at (\x, \y+3) {};
\node[circ, label=right:{$x_{d-1}$}] (xd1) at (\x, \y+1) {};
\node[circ, label=right:{$x_d$}] (xd) at (\x, \y) {};

\node[circ, label=right:{$y_1$}] (y1) at (\x+2, \y+6) {};
\node[circ, label=right:{$y_2$}] (y2) at (\x+2, \y+5) {};
\node[circ, label=right:{$y_{c_y}$}] (yc) at (\x+2, \y+3.5) {};
\node[circ, label=right:{$y_{d-1}$}] (yd1) at (\x+2, \y+1) {};;
\node[circ, label=right:{$y_d$}] (yd) at (\x+2, \y) {};

\node[circ, label=right:{$z_1$}] (z1) at (\x+4, \y+6) {};
\node[circ, label=right:{$z_2$}] (z2) at (\x+4, \y+5) {};
\node[circ, label=right:{$z_{c_z}$}] (zc) at (\x+4, \y+2.5) {};
\node[circ, label=right:{$z_{d-1}$}] (zd1) at (\x+4, \y+1) {};;
\node[circ, label=right:{$z_d$}] (zd) at (\x+4, \y) {};

\node[circ, label=right:{$q_1$}] (q1) at (\x+7, \y+6) {};
\node[circ, label=right:{$q_2$}] (q2) at (\x+7, \y+5) {};
\node[circ, label=right:{$q_i$}] (qc) at (\x+7, \y+3) {};
\node[circ, label=right:{$q_{d-1}$}] (qd1) at (\x+7, \y+1) {};;
\node[circ, label=right:{$q_d$}] (qd) at (\x+7, \y) {};

\foreach \v in {x,y,z,q} {
	\draw (\v1) -- (\v2) --++ (0,-.3);
	\node[draw=none,rotate=90] at ($.5*(\v 2) +.5*(\v c)$) {$\cdots$};
	\draw (\v d) -- (\v d1) --++ (0,.3);
	\node[draw=none,rotate=90] at ($.5*(\v d1) +.5*(\v c)$) {$\cdots$};
	\draw ($(\v c)-(0,.4)$) -- ($(\v c)+(0,.3)$);
}

\node[circ, label=above:{$c'$}] (c) at (\x+5.5, \y+8) {};
\node[circ, label=right:{$c$}] (c') at (\x+5.5, \y-2) {};

\draw [gray!50] ($(xc)+(-1.1, -.2)$) .. controls ($(xc)+(1, -.5)$) and ($(yc)+(-1, -.2)$) .. ($(yc)+(0, -.2)$);
\draw [gray!50] ($(yc)+(0, -.2)$) .. controls ($(yc)+(1, -.2)$) and ($(zc)+(-1, -.5)$) .. ($(zc)+(.5, -.2)$);

\node[draw=none] at ($(xc)+(-.8, -.575)$) {\scriptsize $B_{r_c}(c)$};
\node[draw=none] at ($(xc)+(-.875, .1)$) {\scriptsize $B_{r'_{c'}}(c')$};

\node[circ, label=left:{$f_y$}] (fx) at (\x-1, \y+7.5) {};
\node[circ, label=left:{$f_x$}] (fy) at ($(fx)+(-.7, -.3)$) {};
\node[circ, label=left:{$f_z$}] (fz) at ($(fx)+(.8, .7)$) {};
\node[circ, label=left:{$f_q$}] (fq) at ($(fx)+(1, 1.25)$) {};
\node (FX) at (-.1, 6) {
	\tikz {\draw[color = gray, rounded corners] (0, 0) rectangle ++(1.5,6.75);}
};
\node (FY) at (1.9, 6) {
	\tikz {\draw[color = gray, rounded corners] (0, 0) rectangle ++(1.5,6.75);}
};
\node (FZ) at (3.9, 6) {
	\tikz {\draw[color = gray, rounded corners] (0, 0) rectangle ++(1.5,6.75);}
};
\node (FQ) at (6.9, 6) {
	\tikz {\draw[color = gray, rounded corners] (0, 0) rectangle ++(1.5,6.75);}
};

\draw ($(fx)+(-1, -.8)$) .. controls ($(fx)+(0, -.75)$) and ($(fx)+(1.2, -.25)$) .. ($(fx)+(1.5, 1.5)$);

\draw [gray, dotted, thick] (fy) to [bend right=25] ($(FX.north)+(-.75, -1)$);
\draw [gray, dotted, thick] (fx) to [bend left=25] ($(FY.north)+(-.5, -0.13)$);
\draw [gray, dotted, thick] (fz) to [bend left=35] ($(FZ.north)+(-.5, -0.13)$);
\draw [gray, dotted, thick] (fq) to [bend left=25] ($(FQ.north)+(-.75, -.5)$);

\node[draw=none,rotate=45] at ($(fx)+(-.5, .7)$) {\scriptsize clique};

\draw [thick, cyan] (c) to [bend right=25] (x1);
\node[draw=none,rotate=35,cyan] at (\x+1.1, \y+7.25) {$4d+c_z-c_x$};
\draw [thick, teal] (c) to [bend right=20] (y1);
\node[draw=none,rotate=35,teal] at (\x+3, \y+7.25) {$4d+c_z-c_y$};
\draw [thick, violet] (c) to [bend right=15] (z1);
\node[draw=none,rotate=35,violet] at (\x+4.2, \y+7) {$4d$};
\draw [thick, brown] (c) to [bend left=25] (q1);
\node[draw=none,rotate=-50,brown] at (\x+7, \y+7) {$3d$};

\draw [thick, violet] (c') to [bend left=25] (xd);
\node[draw=none,rotate=-35,violet] at (\x+.7, \y-1) {$4d$};
\draw [thick, orange] (c') to [bend left=20] (yd);
\node[draw=none,rotate=-35,orange] at (\x+2.95, \y-1.25) {$4d+c_y-c_x$};
\draw [thick, cyan] (c') to [bend left=15] (zd);
\node[draw=none,rotate=-40,cyan] at (\x+4.3, \y-1.15) {$4d+c_z-c_x$};
\draw [thick, brown] (c') to [bend left=15] (q1);
\node[draw=none,rotate=-50,brown] at (\x+5.15, \y-1) {$3d$};

\draw ($(fx)+(-.8, -2.8)$) .. controls ($(fx)+(.5, -.5)$) and ($(fx)+(1.5, 0.6)$) .. ($(fx)+(5, 1.3)$);
\end{tikzpicture}
		}
	\caption{Colorful edges denote paths of the depicted lengths. Two black curves separate the clique from the rest of the graph. Each dotted edge shows that a vertex of the clique is adjacent to all the vertices of the graph below the separating curves except those in the adjacent rectangle. The gray curve gives an intuition of which vertices of $P_x$, $P_y$, and $P_z$ are contained in $B_{r_c}(c)$ and $B_{r'_{c'}}(c')$.}\label{fig:whard}
\end{figure}

\begin{itemize}
	\item For each variable $x\in \mathcal{X}$, make a (variable) path $P^x:=(x_1,\ldots,x_d)$ of order~$d$.
	
	\item For each clause $c\in \mathcal{C}$, make two vertices $c$ and $c'$ and, w.l.o.g., suppose that the clause~$c$ contains the variables $x$, $y$, and $z$, and that $c_x\leq c_y\leq c_z$.
	Connect the vertex $c$ to $x_d$, $y_d$, and $z_d$ by three distinct paths of lengths~$4d$, $4d+c_y-c_x$, and $4d+c_z-c_x$, respectively.\footnote{``Connect two vertices $u,v$ by a path of length $p$'' means to make $u$ and $v$ the endpoints of a path of length~$p$.}
	These long paths ensure that there are no unwanted shortcuts in $G$ and that, for $r_c=5d-c_x-1$, the ball $B_{r_c}(c)$ contains all of $P^x$, $P^y$, and $P^z$ except for their first~$c_x$, $c_y$, and $c_z$ vertices (whose respective indices correspond to the respective variable values that satisfy the clause~$c$), respectively.
	Similarly, connect the vertex $c'$ to $x_1$, $y_1$, and $z_1$ by three distinct paths of lengths~$4d+c_z-c_x$, $4d+c_z-c_y$, and $4d$, respectively.
	Analogously, this ensures that, for $r'_{c'}=4d+c_z-1$, the ball $B_{r'_{c'}}(c')$ contains the opposite vertices to $B_{r_c}(c)$ in $P^x$, $P^y$, and $P^z$ (whose respective indices correspond to the respective variable values that do not satisfy the clause $c$), while ensuring that no unwanted shortcuts exist in~$G$. 
	
	\item For each clause $c\in \mathcal{C}$ and each variable $q\in \mathcal{X}$ such that $c$ does not contain $q$, in $G$, connect the vertices $c$ and $c'$ to $q_1$ by distinct paths of length~$3d$.
	These paths ensure that the balls $B_{r_c}(c)$ and $B_{r'_{c'}}(c')$ described above contain every other variable path completely, while ensuring that no unwanted shortcuts exist in $G$. 
	
	\item Let $S$ be the set of all the vertices that currently exist in $G$.
	For each $x\in \mathcal{X}$, in~$G$, make a vertex~$f_x$ and connect it to each vertex in~$S$ except those in~$P^x$ via a distinct path of length~$6d$.
	Finally, for all $x,y\in \mathcal{X}$, make $f_x$ adjacent to $f_y$.
	This ensures that, for each $x\in \mathcal{X}$, the ball $B_{6d}(f_x)$ contains every vertex in $G$ except for those in $P^x$, while ensuring that no unwanted shortcuts exist in $G$.
	This completes the construction of $G$ (see Figure~\ref{fig:whard}). 
\end{itemize}

Set $k:=|\mathcal{X}|$. Let $\mathcal{B}$ consist of $V(G)$, $B_{6d}(f_x)$ for all $x\in \mathcal{X}$, and $B_{r_c}(c)$ and $B_{r'_{c'}}(c')$ for all $c\in \mathcal{C}$.

\noindent\textit{Correctness of the Reduction.}
Suppose, given an instance $\phi$ of {\sc NAE-Integer-3-SAT}, that the reduction
from the subsection above returns $(G,\mathcal{B}, k)$ as an instance of \normalp.
\begin{lemma}\label{lem:w-correctness}
	$\phi$ is a YES-instance of {\sc NAE-Integer-3-SAT} if and only if $(G,\mathcal{B},k)$ is a YES-instance of \normalp. 
\end{lemma}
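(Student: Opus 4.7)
The plan is to prove the two directions of the biconditional separately, with most of the technical work concentrated in the forward direction.

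\textbf{Forward direction ($\Rightarrow$).} Given a satisfying NAE-assignment $\tau:\mathcal{X}\to\{1,\dots,d\}$ of $\phi$, I would define the positive teaching map $T$ of dimension $k$ by setting
\[
  T(V(G)) = \{x_{\tau(x)}\mid x\in\mathcal{X}\},\quad T(B_{6d}(f_x)) = \{f_x\}\cup\{y_{\tau(y)}\mid y\in\mathcal{X}\setminus\{x\}\},
\]
and $T(B_{r_c}(c))=\{c\}$, $T(B_{r'_{c'}}(c'))=\{c'\}$ for each clause $c\in\mathcal{C}$. Each set has size at most $k=|\mathcal{X}|$ and is contained in its ball, so it suffices to check that every pair of distinct balls is distinguished. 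The pair $(V(G),B_{6d}(f_x))$ is handled by $x_{\tau(x)}\in T(V(G))\cap V(P^x)$, since $V(G)\setminus B_{6d}(f_x)=V(P^x)$. The crucial pair $(V(G),B_{r_c}(c))$ uses the NAE property: at least one of the inequalities $\tau(w)\leq c_w$ holds for $w\in\{x,y,z\}$, so the corresponding $w_{\tau(w)}\in T(V(G))$ lies outside $B_{r_c}(c)$; symmetrically for $(V(G),B_{r'_{c'}}(c'))$, the other half of the NAE property yields some $w_{\tau(w)}\notin B_{r'_{c'}}(c')$. All remaining pairs rely on carefully computed distance bounds in $G$: the path lengths $3d$, $4d$, $6d$ ensure $d(c,f_x)=d(c',f_x)=6d$ and, for any two (possibly identical) clauses $c_1,c_2$, that $d(c_1,c_2)=d(c_1,c'_2)=d(c'_1,c'_2)=6d$ via the shortcut $c_1\to q_1\to c_2$ for some variable $q$ outside both clauses. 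Since $r_c,r'_{c'}<5d<6d$, the centres themselves are the needed distinguishing witnesses, while pairs $(B_{6d}(f_x),B_{6d}(f_y))$ are distinguished by $y_{\tau(y)}\in T(B_{6d}(f_x))\setminus B_{6d}(f_y)$.

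\textbf{Backward direction ($\Leftarrow$).} Let $T$ be a non-clashing teaching map for $\mathcal{B}$ of dimension at most $k=|\mathcal{X}|$. For each $x\in\mathcal{X}$, distinguishing $(V(G),B_{6d}(f_x))$ requires a vertex in $T(V(G))\cup T(B_{6d}(f_x))$ outside $V(G)\cap B_{6d}(f_x)=B_{6d}(f_x)$; since $T(B_{6d}(f_x))\subseteq B_{6d}(f_x)$, this vertex must lie in $T(V(G))\cap V(P^x)$. As the $|\mathcal{X}|$ variable paths are pairwise disjoint and $|T(V(G))|\leq k$, it follows that $T(V(G))$ contains exactly one vertex $x_{j_x}$ from each $P^x$. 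Define $\tau(x):=j_x$. For a clause $c$ with variables $x,y,z$, the same argument applied to $(V(G),B_{r_c}(c))$ forces some vertex from $T(V(G))$ to lie outside $B_{r_c}(c)$; since $B_{r_c}(c)$ contains $P^w$ entirely for each $w\notin\{x,y,z\}$, the witness must come from a clause variable $w$ and must satisfy $j_w\leq c_w$. Symmetrically, $(V(G),B_{r'_{c'}}(c'))$ forces $j_{w'}>c_{w'}$ for some $w'\in\{x,y,z\}$. Hence, for every clause, $\tau$ satisfies at least one and fails at least one of its three inequalities, i.e., the NAE condition holds.

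\textbf{Main obstacle.} The conceptual core (forcing $T(V(G))$ to encode the assignment via the $B_{6d}(f_x)$-balls, and decoding the NAE condition from the two clause balls per clause) is clean. The main technical burden lies in the forward direction, where every pair type must be verified via explicit distance computations in the highly subdivided graph $G$: one must argue that the long paths of lengths $3d$, $4d$, and $6d$ prevent all unwanted shortcuts so that the centres $c$, $c'$, and $f_x$ truly lie outside the ``other'' balls. A minor subtlety is that the $6d$-shortcut $c_1\to q_1\to c_2$ between distinct clauses requires a variable $q\notin c_1\cup c_2$; this is guaranteed by assuming $|\mathcal{X}|\geq 7$ without loss of generality, as smaller instances of {\sc NAE-Integer-3-SAT} are trivially solvable and can be discarded by the reduction.
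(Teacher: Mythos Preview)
Your proposal is correct and follows essentially the same route as the paper's proof: in the forward direction you build the teaching map with $T(V(G))$ encoding the assignment, singleton sets $\{c\},\{c'\}$ for clause balls, and $f_x$ together with one vertex per remaining variable path for $B_{6d}(f_x)$, then verify all pair types via the distance bounds; in the backward direction you extract the assignment from $T(V(G))$ using the $B_{6d}(f_x)$ balls and read off the NAE condition from the two clause balls. The only cosmetic differences are that the paper allows an arbitrary vertex of each $P^y$ in $T(B_{6d}(f_x))$ (your choice $y_{\tau(y)}$ is a valid instance), and the paper argues only the needed lower bound $d(c_1,c_2)\geq 6d$ (any path must pass through a variable-path vertex, at distance $\geq 3d$ from each side) rather than the exact value, so your $|\mathcal{X}|\geq 7$ assumption is not actually required.
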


\begin{proof}
First, suppose that $\phi$ is a YES-instance of {\sc NAE-Integer-3-SAT}. We construct a positive teaching map $T$ as follows.
In the NAE-satisfying variable assignment for~$\phi$, for each variable $x\in \mathcal{X}$, if the integer~$j$ is assigned to $x$, then place $x_j$ in $T(V(G))$.
For each $c\in \mathcal{C}$, set $T(B_{r_c}(c)):=\{c\}$ and $T(B_{r'_{c'}}(c')):=\{c'\}$.
Lastly, for each $x\in X$, $T(B_{6d}(f_x))$ contains $f_x$ and one arbitrary vertex from each of the $k-1$ variable paths it contains.
We prove that $T$ is non-clashing for $\mathcal{B}$.

For any two balls $B_1,B_2\in \mathcal{B}$ centered at vertices of the form $c$ or $c'$ for the same or different clauses, $T$ satisfies the non-clashing condition since $T(B_1)$ contains its center (some $c$ or $c'$) while $B_2$ does not contain this vertex.
Indeed, the radius of $B_2$ is less than $5d$, while the distance between any two vertices of the form $c$ or $c'$ is at least $6d$ since any shortest path between them contains a vertex from a variable path. 
For any $x,y\in \mathcal{X}$, $T$ satisfies the non-clashing condition for $B_{6d}(f_x)$ and $B_{6d}(f_y)$ since $T(B_{6d}(f_x))$ contains a vertex in $P^y$ while $B_{6d}(f_y)$ does not.
For any $x\in \mathcal{X}$ and $c\in \mathcal{C}$, $T$ satisfies the non-clashing condition for $B_{6d}(f_x)$ and $B_{r_c}(c)$, as well as $B_{6d}(f_x)$ and $B_{r_{c'}}(c')$, since $T(B_{6d}(f_x))$ contains $f_x$ while $B_{r_c}(c)$ and $B_{r_{c'}}(c')$ do not as $r_c,r_{c'}<5d$ while $c$ and $c'$ are at distance $6d$ from $f_x$.
For any $x\in \mathcal{X}$, $T$ satisfies the non-clashing condition for $V(G)$ and $B_{6d}(f_x)$ since $T(V(G))$ contains a vertex in $P^x$ while $B_{6d}(f_x)$ does not.
Finally, for any $c\in \mathcal{C}$, $T$ satisfies the non-clashing condition for $V(G)$ and $B_{r_c}(c)$, as well as $V(G)$ and $B_{r_{c'}}(c')$, since, among the variables contained in the clause $c$, $T(V(G))$ contains at least one vertex from one of those variable paths whose index satisfies $c$, and at least one vertex from one of those variable paths whose index does not satisfy $c$.
As can be recalled from the construction, this implies that $B_{r_c}(c)$ and $B_{r_{c'}}(c')$ do not contain the respective vertices.
Thus, $T$ satisfies the non-clashing property for all pairs of balls in $\mathcal{B}$.

Now, we prove the reverse direction, so suppose that $(G,\mathcal{B},k)$ is a YES-instance of \normalp\ and that $T$ is the corresponding teaching map.
For all $x\in \mathcal{X}$, in order for $T$ to satisfy the non-clashing condition for $B_{6d}(f_x)$ and $V(G)$, we have that $T(V(G))$ contains at least one vertex from $P^x$ as $B_{6d}(f_x)\subset V(G)$ and $V(G)\setminus B_{6d}(f_x)$ is restricted to the vertices in $P^x$.
Since $k=|\mathcal{X}|$, we in fact have that $T(V(G))$ contains exactly one vertex from $P^x$ for all $x\in \mathcal{X}$.
Extract a variable assignment for $\phi$ from $T(V(G))$ as follows.
For each $x\in X$, assign the variable $x$ the value of the index of the unique vertex contained in both $P^x$ and $T(V(G))$.
We prove that this is an NAE-satisfying variable assignment for~$\phi$.

W.l.o.g., let $c\in \mathcal{C}$ be a clause containing the variables $x,y,z\in \mathcal{X}$.
In order for $T$ to satisfy the non-clashing condition for $V(G)$ and $B_{r_c}(c)$, $T(V(G))$ must contain at least one vertex in $P^x$, $P^y$ or $P^z$ that is not contained in $B_{r_c}(c)$.
Analogously, in order for $T$ to satisfy the non-clashing condition for $V(G)$ and $B_{r'_{c'}}(c')$, $T(V(G))$ must contain at least one vertex in $P^x$, $P^y$ or $P^z$ that is not contained in $B_{r'_{c'}}(c')$.
Recall that all of the vertices in $P^x$, $P^y$, and $P^z$ that are not contained in $B_{r_c}(c)$ have respective indices that correspond to the respective variable values that satisfy the clause $c$.
Similarly, recall that all of the vertices in $P^x$, $P^y$, and $P^z$ that are not contained in $B_{r_{c'}}(c')$ have respective indices that correspond to the respective variable values that do not satisfy the clause $c$.
As these arguments hold for any clause $c\in \mathcal{C}$, the variable assignment extracted above corresponds to an NAE-satisfying variable assignment for~$\phi$. 
\end{proof}

Now, we are ready to proceed with the proof of Theorem~\ref{thm:hard-fvs}.

\begin{proof}[Proof of Theorem~\ref{thm:hard-fvs}]
	Lemma~\ref{lem:w-correctness} establishes the correctness of the polynomial-time reduction from the beginning of Section~\ref{sec:whard}.
	To complete the proof, it remains to show that $\fvs(G)+\pw(G)+k$ is bounded above by a function of $|\mathcal{X}|$.
	This clearly holds for $k$ which is, by definition, $|\mathcal{X}|$.
	Deleting from $G$ the vertices $x_1$, $x_d$, and $f_x$ for all $x\in \mathcal{X}$ results in an acyclic graph $G'$; in particular $G$ has a feedback vertex set of size $3|\mathcal{X}|$.
	To establish a bound on $\pw(G)$, it now suffices to show that $G'$ also has bounded pathwidth.
	Note that $G'$ consists of a set of connected components, each of which is either a subdivided caterpillar (this is what remains of each component containing a variable path) or a vertex (of the form $c$ or $c'$) with multiple pendent subdivided caterpillars and (simple) paths. Since deleting one further vertex from each connected component may only reduce the pathwidth by $1$ and we need a single such deletion operation to reach a graph class of constant pathwidth (see Section~\ref{sec:prelims}), we also obtain that $\pw(G)$ is bounded by a function of~$|\mathcal{X}|$.
\end{proof}

\section{Concluding Remarks}
Our computational upper and lower bounds provide a near-comprehensive understanding of the complexity of computing the positive non-clashing teaching dimension. Apart from our contributions to the previously studied strict setting, we consider it notable that our work is the first to also tackle the complexity of non-clashing teaching in the non-strict setting---\emph{i.e.}, the more general (and arguably more natural) case where not all possible concepts are present.

One open question highlighted by our work concerns the tiny remaining gap between the algorithmic lower and upper bounds obtained in Theorem~\ref{thm:ETHlower} and Proposition~\ref{pro:exactalgo}. In particular, is there a way to improve the running time of the latter algorithm to $2^{\bigoh(|V(G)|\cdot d\cdot k)}$ and make the bounds tight? More general directions for future work are to perform a similar complexity analysis in the non-positive setting and to consider approximation algorithms.

\section*{Acknowledgements}
This work was funded by the Austrian Science Fund (FWF) [10.55776/Y1329 and 10.55776/COE12], the WWTF Vienna Science and Technology Fund (Project 10.47379/ICT22029), the \includegraphics[width=0.5cm]{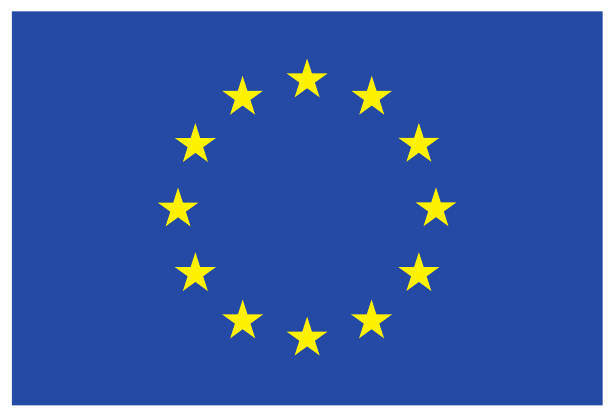} European Union's Horizon 2020 research and innovation COFUND programme LogiCS@TUWien (grant agreement No 101034440),
the Spanish Ministry of Economic Affairs and Digital Transformation and the European Union-NextGenerationEU through the  project 6G-RIEMANN~(TSI-063000-2021-147), and the Smart Networks and Services Joint Undertaking (SNS JU) under the European Union's Horizon Europe and innovation programme under Grant Agreement No. 101139067 (ELASTIC). Views and opinions expressed are
however those of the authors only and do not necessarily reflect those of the European
Union (EU). Neither the EU nor the granting authority can be held responsible for them.

\bibliography{NCTD_param_bib}
\bibliographystyle{alpha}

\end{document}